\newcommand\aes{a_{\emptyset}}
\newcommand{\beq}{\begin{equation}}
\newcommand{\eeq}{\end{equation}}
\newcommand{\bdm}{\begin{displaymath}}
\newcommand{\edm}{\end{displaymath}}
\newcommand{\bea}{\begin{eqnarray}}
\newcommand{\eea}{\end{eqnarray}}
\newcommand{\poly}{{\mbox{\rm poly}}}
\newcommand{\benum}{\begin{enumerate}}
\newcommand{\eenum}{\end{enumerate}}
\newcommand{\bit}{\begin{itemize}}
\newcommand{\eit}{\end{itemize}}
\newcommand{\bdes}{\begin{description}}
\newcommand{\edes}{\end{description}}
\newcommand{\bpic}{\begin{picture}}
\newcommand{\epic}{\end{picture}}
\newcommand{\bc}{\begin{center}}
\newcommand{\ec}{\end{center}}
\newcommand{\oc}{{\mathrm{oc}}}
\newtheorem{theorem}{Theorem}[section]
\newtheorem{proposition}[theorem]{Proposition}
\newtheorem{corollary}[theorem]{Corollary}
\newtheorem{definition}[theorem]{Definition}
\newtheorem{example}[theorem]{Example}
\def\np{{\sf NP}}
\def\GASP{{\sf GASP}}
\def\aGASP{{\sf a\text{-}GASP}}
\def\calY{{\mathcal Y}}
\def\calC{{\mathcal C}}
\def\calR{{\mathcal R}}
\begin{document}

\title{Group Activity Selection Problem}
\author{
Andreas Darmann$^1$ \and  
Edith Elkind$^2$ \and 
Sascha Kurz$^3$ \and
J\'er\^ome Lang$^4$ \and 
Joachim Schauer$^1$ \and 
Gerhard Woeginger$^5$
}

\address{
Universit\"at Graz, Austria$^1$
\and
Nanyang Technological University, Singapore$^2$
\and
Universit\"at Bayreuth, Germany$^3$
\and
LAMSADE -- Universit\'e Paris-Dauphine, France$^4$
\and
TU Eindhoven, The Netherlands$^5$
}

\maketitle

\begin{abstract}
We consider a setting where one has to organize one or several group activities 
for a set of agents. 
Each agent will participate in at most one activity, and her preferences over activities 
depend on the number of participants in the activity. 
The goal is to assign agents to activities based on their preferences.
We put forward a general model for this setting, which is a natural generalization 
of anonymous hedonic games. We then focus on a special case of our model, 
where agents' preferences are binary, i.e., each agent classifies all pairs 
of the form "(activity, group size)" into ones that are acceptable and ones that are not.  
We formulate several solution concepts for this scenario, 
and study them from the computational point of view, providing
hardness results for the general case as well as efficient algorithms
for settings where agents' preferences satisfy certain natural constraints.
\end{abstract}

\section{Introduction}
There are many real-life situations where a group of agents is faced with a choice 
of multiple activities, and the members of the group have differing preferences over these
activities. Sometimes it is feasible for the group to split into smaller subgroups, so that each subgroup
can pursue its own activity. Consider, for instance, a workshop whose organizers would like 
to arrange one or more social activities for the free afternoon.\footnote{Some of the co-authors
of this paper had to deal with this problem when co-organizing a Dagstuhl seminar.} 
The available activities include a hike, a bus trip, and a table tennis competition.
As they will take place simultaneously, 
each attendee can select at most one activity (or choose not to participate).
It is easy enough to elicit the attendees' preferences over the activities, and divide
the attendees into groups based on their choices. However, the situation
becomes more complicated if one's preferences may depend on the number 
of other attendees who choose the same activity.  
For instance, the bus trip has a fixed transportation cost that has to be shared among its participants, 
which implies that, typically, an attendee $i$ is only willing to go on the bus trip 
if the number of other participants of the bus trip exceeds a threshold $\ell_i$. 
Similarly, $i$ may only be willing to play table tennis
if the number of attendees who signed up for the tournament
does {\em not} exceed a threshold $u_i$: as there is only one table, the more 
participants, the less time each individual spends playing. 

Neglecting to take the number of participants of each activity into account may lead to highly undesirable 
outcomes, such as a bus that is shared by two persons, each of them paying a high cost, 
and a 48-participant table tennis tournament with one table. Adding constraints on the number
of participants for each activity is a practical, but imperfect
solution, as the agents' preferences over group sizes may differ: while some attendees
(say, senior faculty) may be willing to go on the bus trip with just 4--5 other participants, others
(say, graduate students) cannot afford it unless the number of participants exceeds 10.
A more fine-grained approach 
is to elicit the agents' preferences over pairs of the form ``(activity, group size)'',
rather than over activities themselves, and allocate agents to activities based on this information. 
In general, agents' preferences can be thought of as weak orders over all such pairs, including
the pair ``(do nothing, 1)'', which we will refer to as the {\em void activity}. A simpler
model, which will be the main focus of this paper,  
assumes that each agents classifies all pairs into ones that are acceptable to him and ones
that are not, and if an agent views his current assignment as unacceptable, he prefers (and is allowed)
to switch to the void activity (so the assignment is unstable unless
it is acceptable to all agents).

The problem of finding a good assignment of agents to activities, 
which we will refer to as the {\em Group Activity Selection Problem} ($\GASP$), 
may be viewed as a mechanism design problem (or, more narrowly, a voting problem) 
or as a coalition formation problem, depending on whether
we expect the agents to act strategically when reporting their preferences. Arguably, in our
motivating example the agents are likely to be honest, so throughout the paper we assume that the 
central authority knows (or, rather, can reliably elicit)
the agents' true preferences, and its goal is to find an assignment of players
to activities that, informally speaking, is stable and/or maximizes the overall satisfaction. 
This model is closely related to that of {\em anonymous hedonic games}~\cite{banerjee:1998a}, where, 
just as in our setting, players have to split into groups and each player has preferences
over possible group sizes. The main difference between anonymous hedonic games and our problem
is that, in our setting, the agents' preferences depend not only on the group size, 
but also on the activity that has been allocated to their group; thus, our model can be seen
as a generalization of anonymous hedonic games. On the other hand, we can represent our problem
as a general (i.e., non-anonymous) hedonic game~\cite{bogomolnaia:2002,banerjee:1998a}, by creating
a dummy agent for each activity and endowing it with suitable preferences 
(see Section~\ref{sec:hedonic} for details). However, 
our setting has useful structural properties that distinguish it from a generic hedonic game:
for instance, it allows for succinct representation of players' preferences, and, as we will see, 
has several natural special cases that admit efficient algorithms for finding good outcomes. 

In this paper, we initiate the formal study of $\GASP$. 
Our goal is to put forward a model for this problem that is expressive 
enough to capture many real-life activity selection scenarios, 
yet simple enough to admit efficient procedures for finding good assignments of agents
to activities. We describe the basic structure of the problem, 
and discuss plausible constraints of the number and type of available activities 
and the structure of agents' preferences.
We show that even under a fairly simple preference model (where agents are assumed
to approve or disapprove each available alternative)
finding an assignment that maximizes the number of satisfied agents is computationally
hard; however, we identify several natural special cases of the problem that admit 
efficient algorithms. We also briefly discuss the issue of stability
in our setting.

We do not aim to provide a complete analysis of the group activity selection problem; 
rather, we view our work as a first step towards understanding the algorithmic 
and incentive issues that arise in this setting.
We hope that our paper will lead
to future research on this topic; to facilitate this, throughout the paper
we highlight several possible extensions of our model as well as list some
problems left open by our work.  


\section{Formal Model}\label{model}


\begin{definition}
An instance of the {\em Group Activity Selection Problem ({\GASP})} is given 
by a set of  {\em agents}  $N = \{1,\ldots, n\}$, 
a set of {\em activities} $A=A^*\cup\{\aes\}$, where $A^* = \{a_1, \ldots, a_p\}$, 
and a {\em profile} $P$, which consists of $n$ {\em votes} (one for each agent): 
$P = (V_1,\ldots, V_n)$. The vote of agent $i$ describes his preferences
over the set of {\em alternatives} $X = X^*\cup\{\aes\}$, 
where $X^* =A^* \times \{1,\ldots,n\}$;
alternative $(a,k)$, $a\in A^*$, is interpreted as ``activity $a$ with $k$ participants'', 
and $\aes$ is the {\em void activity}. 

The vote $V_i$ of an agent $i\in N$ is (also denoted by $\succeq_i$)
is a weak order over $X^*$;
its induced strict preference and indifference relations 
are denoted by  $\succ_i$ and $\sim_i$, respectively.
We set $S_i=\{(a, k)\in X^*\mid (a, k)\succ_i \aes\}$;
we say that voter $i$ {\em approves} of all alternatives in $S_i$, 
and refer to the set $S_i$ as the {\em induced approval vote} of voter $i$.

Throughout the paper we will mostly focus on a special case of our problem
where no agent is indifferent between the void activity and any other alternative
(i.e., for any $i\in N$ we have $\{x\in X^*\mid x\sim_i\aes\}=\emptyset$), 
and each agent is indifferent between all the alternatives in $S_i$. In other words,
preferences are {\em trichotomous}: the agent partitions $X$ into three clusters
$S_i$, $\{\aes\}$ and $X \setminus (S_i \cup \{\aes\})$, is indifferent between
two alternatives of the same cluster, prefers any $(a,k)$ in $S_i$ to
$\aes$, and $\aes$ to any $(a,k)$ in $X \setminus (S_i \cup \{\aes\})$;
we denote this special case of our problem by $\aGASP$. 
\end{definition}

It will be convenient to
distinguish between activities that are unique and ones that exist in multiple copies. For instance,
if there is a single tennis table and two buses, then we can organize one table tennis tournament, 
two bus trips (we assume that there is only one potential destination for the bus trip, so these
trips are identical), and an unlimited number of hikes (again, we assume that there is only one
hiking route). This distinction will be useful for the purposes
of complexity analysis: for instance, some of the problems we consider
are easy when we have $k$ copies of one activity, but hard when we have $k$ distinct activities. 
Formally, we say that two activities $a$ and $b$ are {\em equivalent}
if for every agent $i$ and every $j\in\{1, \dots, n\}$ it holds that $(a, j)\sim_i(b, j)$.
We say that an activity $a\in A^*$ is {\em $k$-copyable} if $A^*$
contains exactly $k$ activities that are equivalent to $a$ (including $a$ itself).
We say that $a$ is {\em simple} if it is $1$-copyable; if $a$ is $k$-copyable for $k\ge n$, 
we will say that it is {\em $\infty$-copyable} (note that we would never need 
to organize more than $n$ copies of any activity). If some activities in $A^*$ are equivalent, 
$A^*$ can be represented more succinctly by listing one representative of each
equivalence class, together with the number of available copies. However, 
as long as we make the reasonable assumption that each activity exists in at most $n$
copies, this representation is at most polynomially more succinct. 

Our model can be enriched by specifying a set of {\em constraints} $\Gamma$. 
One constraint that arises frequently in practice 
is a {\em global cardinality} constraint, which specifies a bound $K$
on the number of activities to be organized. More generally, we could also consider 
more complex constraints on the set of activities that can be organized simultaneously, 
which can be encoded, e.g., by a propositional formula or a set of linear inequalities.
We remark that there can also be external constraints on the number of participants for each activity:
for instance, a bus can fit at most $40$ people. However, these constraints can be incorporated
into agents' preferences, by assuming that all agents view the alternatives
that do not satisfy these constraints as unacceptable. 


\subsection{Special Cases}\label{sec:special}

We now consider some natural restrictions on agents' preferences that may
simplify the problem of finding a good assignment. 
We first need to introduce some additional notation.
Given a vote $V_i$ and an activity $a \in A^*$, let $S_i^{\downarrow a}$
denote the projection of $S_i$ onto $\{a\} \times \{1,\ldots,n\}$. That is, we set 
$S_i^{\downarrow a} = \{k \mid (a, k) \in S_i\}$.

\begin{example}
Let $A^* = \{a,b\}$ and consider an agent $i$ whose vote $V_i$ is given by\\
\centerline{$
(a,8) \succ_i (a,7) \sim_i (b,4) \succ_i (a,9) \succ_i (b,3) \succ_i (b,5) \succ_i 
(a,6) \succ_i (b,6) \succ_i \aes  \succ_i \ldots
$}\\
Then 
$S_i=\{a\}\times[6, 9]\cup\{b\}\times[3, 6]$ and $S_i^{\downarrow a}=\{6, 7, 8, 9\}$. 
\end{example}

We are now ready to define two types of restricted preferences 
for $\aGASP$ that are directly
motivated by our running example, namely, {\em increasing} and {\em decreasing}
preferences. Informally, under increasing preferences an agent prefers to share
each activity with as many other participants as possible (e.g., because each activity
has an associated cost, which has to be split among the participants), and under
decreasing preferences an agent prefers to share each activity with as few other participants
as possible (e.g., because each activity involves sharing a limited resource).
Of course, an agent's preferences may also be increasing with respect to some activities
and decreasing with respect to others, depending on the nature of each activity.
We provide a formal definition for $\aGASP$ only; however, it can be extended
to $\GASP$ in a straightforward way.

\begin{definition}\label{def:inc-dec-mix}
Consider an instance $(N, A, P)$ of $\aGASP$.
We say that the preferences of agent $i$ are {\em increasing (INC)}
with respect to an activity $a\in A^*$ 
if there exists a threshold $\ell_i^a\in\{1, \dots, n+1\}$
such that $S_i^{\downarrow a} = [\ell_i^a, n]$ (where we assume that $[n+1, n]=\emptyset$).
Similarly, 
we say that the preferences of agent $i$ are {\em decreasing (DEC)}
with respect to an activity $a\in A^*$
if there exists a threshold $u_i^a\in\{0, \dots, n\}$
such that $S_i^{\downarrow a} = [1, u_i^a]$ (where we assume that $[1, 0]=\emptyset$).

We say that an instance $(N, A, P)$ of $\aGASP$
is {\em increasing} (respectively, {\em decreasing}) 
if the preferences of each agent $i\in N$ are increasing 
(respectively, decreasing) with respect to each activity $a\in A^*$.
We say that an instance $(N, A, P)$ of $\aGASP$
is {\em mixed increasing-decreasing (MIX)}
if there exists a set $A^+\subseteq A^*$ such that
for each agent $i\in N$ his preferences are increasing
with respect to each $a\in A^+$ and decreasing with respect 
to each $a\in A^- = A^*\setminus A^+$.
\end{definition}

A recently proposed model which can be embedded into $\GASP$ with decreasing preferences is the ordinal version of {\em cooperative group buying} (\cite{LuBoutilier12}, Section 6): the model has a set of buyers and a set of items with volume discounts; buyers rank all pairs $(j,p_j)$ for any item $j$ and any of its possible discounted prices, where the discounted price is a function of the number of buyers who are matched to the item. 

For some activities, an agent may have both a lower and an upper bound on the acceptable
group size: e.g., one may prefer to go on a hike with at least $3$ other people, but does
not want the group to be too large (so that it can maintain a good pace). 
In this case, we say that an agent has {\em interval} (INV) preferences; note that 
INC/DEC/MIX
preferences are a special case
of interval preferences. 

\begin{definition}\label{def:int}
Consider an instance $(N, A, P)$ of $\aGASP$.
We say that the preferences of agent $i$ are {\em interval (INV)}
if for each $a\in A^*$ there exists a pair of thresholds 
$\ell_i^a, u_i^a\in\{1, \dots, n\}$   
such that $S_i^{\downarrow a} = [\ell_i^a, u_i^a]$ 
(where we assume that $[i, j]=\emptyset$ for $i>j$).
\end{definition}

Other natural constraints on preferences include restricting the size of $S_i$
(or, more liberally, that of $S_i^{\downarrow a}$ for each $a\in A^*$), 
or requiring agents to have similar preferences: for instance, 
one could limit the number of agent {\em types}, i.e., require that the set
of agents can be split into a small number of groups so that the agents
in each group have identical preferences. We will not define such constraints
formally, but we will indicate if they are satisfied by the instances
constructed in the hardness proofs in Section~\ref{sec:hard}.  

\subsection{$\GASP$ and Hedonic Games}\label{sec:hedonic}
Recall that a {\em hedonic game}~\cite{banerjee:1998a,bogomolnaia:2002} 
is given by a set of agents $N$, and, for each agent
$i\in N$, a weak order $\ge_i$ over all coalitions (i.e., subsets of $N$) that include $i$.
That is, in a hedonic game each agent has preferences over coalitions 
that he can be a part of. A coalition $S$, $i\in S$, is said to be {\em unacceptable}
for player $i$ if $\{i\}>_i S$. 
A hedonic game is said to be {\em anonymous} if each agent
is indifferent among all coalitions of the same size that include him, i.e., 
for every $i\in N$ and every $S, T\subseteq N\setminus\{i\}$ 
such that $|S|=|T|$ it holds that
$S\cup\{i\}\ge_i T\cup\{i\}$ and $T\cup\{i\}\ge_i S\cup\{i\}$.

At a first glance, it may seem that the $\GASP$ formalism
is more general than that of hedonic games, since in $\GASP$
the agents care not only about their coalition, 
but also about the activity they have been assigned to.
However, we will now argue that $\GASP$ can be embedded 
into the hedonic games framework.

Given an instance of the $\GASP$ problem $(N, A, P)$ with $|N|=n$, where the $i$-th
agent's preferences are given by a weak order $\succeq_i$, 
we construct a hedonic game $H(N, A, P)$ as follows.
We create $n+p$ players; the first $n$ players correspond to agents in $N$, and the last $p$
players correspond to activities in $A^*$. The last $p$ players are indifferent among all coalitions.
For each $i=1, \dots, n$, player $i$
ranks every non-singleton coalition with no activity players as unacceptable;
similarly, all coalitions with two or more activity players are ranked as unacceptable.
The preferences over coalitions with exactly one activity player are derived
naturally from the votes: if $S, T$ are two coalitions involving player $i$, $x$ is the unique activity
player in $S$, and $y$ is the unique activity player in $T$, then $i$ 
weakly prefers $S$ to $T$ in $H(N, A, P)$
if and only if  $(x, |S|-1)\succeq_i (y, |T|-1)$,
and $i$ weakly prefers $S$ to $\{i\}$ in $H(N, A, P)$ if and only if $(x, |S|-1)\succeq_i \aes$.
We emphasize that the resulting hedonic games are not anonymous.
Further, while this embedding allows us to apply the standard solution concepts
for hedonic games without redefining them, the intuition behind these solution
concepts is not always preserved (e.g., because activity players never want to deviate). 
Therefore, in Section~\ref{concepts}, we will provide formal definitions of the relevant
hedonic games solution concepts adapted to the setting of $\aGASP$.

We remark that when $A^*$ consists of a single $\infty$-copyable activity 
(i.e., there are $n$ activities in $A^*$, all of them equivalent to each other), 
$\GASP$ become equivalent to anonymous hedonic games.
Such games have been studied in detail by Ballester~\cite{ballester}, 
who provides a number of complexity results for them. In particular, 
he shows that finding an outcome that is core stable, Nash stable or individually
stable 
(see Section~\ref{concepts} for the definitions of some of these concepts in the context of $\aGASP$) 
is $\np$-hard. 
Clearly, all these complexity results also hold for $\GASP$. 
However, they do not directly imply similar hardness results for $\aGASP$.


\section{Solution Concepts}\label{concepts}
Having discussed the basic model of $\GASP$,
as well as a few of its extensions and special cases, we are ready
to define what constitutes a solution to this problem.

\begin{definition}
An {\em assignment} for an instance $(N, A, P)$ of $\GASP$ 
is a mapping $\pi: N \rightarrow A$;
$\pi(i)=\aes$ means that agent $i$ does not participate in any activity.
Each assignment naturally partitions the agents into at most $|A|$
groups: we set $\pi^0=\{i\mid \pi(i)=\aes\}$
and $\pi^j=\{i\mid \pi(i)=a_j\}$ for $j=1, \dots, p$. 
Given an assignment $\pi$, the {\em coalition structure} 
$\mathit{CS}_\pi$ induced by $\pi$ is the coalition structure over $N$ defined as follows: 
$$
\mathit{CS}_\pi = \left\{\pi^j\mid j=1, \dots, p, \pi^j\neq\emptyset\right\}\cup
                  \left\{\{i\}\mid i\in \pi^0\right\}.
$$
%
\end{definition}

Clearly, not all assignments are equally desirable. As a minimum requirement,
no agent should be assigned to a coalition that he deems unacceptable.
More generally, we prefer an assignment to be stable, 
i.e., no agent (or group of agents) should have an incentive to change its activity.
Thus, we will now define several {\em solution concepts}, i.e., classes
of desirable assignments. We will state our definitions for $\aGASP$ only, 
though all of them can be extended to the more general case of $\GASP$ in a natural way.
Given the connection to hedonic games pointed out in 
Section~\ref{sec:hedonic}, we will proceed by adapting 
the standard hedonic game solution concepts to our setting; however, this has
to be done carefully to preserve intuition that is specific to our model.
 
The first solution concept that we will consider is {\em individual rationality}.
\begin{definition}
Given an instance $(N, A, P)$ of $\aGASP$, 
an assignment $\pi:N\to A$ is said to be  {\em individually rational} if for
every $j>0$ and every agent $i\in \pi^j$ it holds that $(a_j, |\pi^j|)\in S_i$.
\end{definition}
Clearly, if an assignment is not individually rational, there exists
an agent that can benefit from abandoning his coalition in favor of the void activity.
Further, an individually rational assignment always exists: for instance, 
we can set $\pi(i)=\aes$ for all $i\in N$. However, a benevolent central authority
would usually want to maximize the number of agents that are assigned to non-void
activities. Formally, let $\#(\pi)=|\{i\mid \pi(i)\neq \aes\}|$ denote the number
of agents assigned to a non-void activity. We say that $\pi$ is
{\em maximum individually rational} if $\pi$ is individually rational and 
$\#(\pi)\ge \#(\pi')$ for every individually rational assignment $\pi'$.
Further, we say that $\pi$ is {\em perfect}\footnote{The terminological similarity with
the notion of perfect partition in a hedonic game~\cite{AzizBH11} is not a coincidence; there a perfect
partition assigns each agent to her preferred coalition; here a perfect assignment assigns each
agent to one of her equally preferred alternatives.} 
if $\#(\pi)=n$. 
We denote the size of a maximum individually rational
assignment for an instance $(N, A, P)$ by $\#(N, A, P)$.
In Section~\ref{complexity}, we study the complexity of computing a 
perfect or maximum individually rational assignment  for $\aGASP$, 
both for the general model and for the special cases defined in Section~\ref{sec:special}. 

Besides individual rationality, there are a number of solution concepts
for hedonic games that aim to capture stability against individual or group
deviations, such as Nash stability, individual stability, contractual individual
stability, and (weak and strong) core stability (see, e.g.,~\cite{coopbook}).
In what follows, due to lack of space, we only provide the formal definition
(and some results) for Nash stability. 
We briefly discuss how to adapt 
other notions of stability to our setting, but we leave 
the detailed study of their algorithmic properties as a topic for future work.

\begin{definition}\label{def:nash}
Given an instance $(N, A, P)$ of $\aGASP$, 
an assignment $\pi:N\to A$ is said to be {\em Nash stable} 
if it is individually rational 
and for every agent $i\in N$ such that $\pi(i) = \aes$ and every $a_j \in A^*$ 
it holds that $(a_j, |\pi^j| + 1) \not\in S_i$. 
\end{definition}
If $\pi$ is not Nash stable, then there is an agent assigned to the void activity
who wants to join a group that is engaged in a non-void activity, i.e., he would
have approved of the size of this group and its activity choice
if he was one of them. Note that a perfect assignment is Nash stable.
The reader can verify that our definition is a direct adaptation
of the notion of Nash stability in hedonic games: if an assignment is individually
rational, the only agents who can profitably deviate are the ones assigned
to the void activity.
The requirement of Nash stability is much stronger 
than that of individual rationality, and there are cases where 
a Nash stable assignment does not exist (the proof is omitted due to space limits).

\begin{proposition}\label{prop:ns-nonexistence}
For each $n\ge 2$, there exists an instance $(N, A, P)$ 
of $\aGASP$ with $|N|=n$ that does not admit a Nash stable assignment.
This holds even if $|A^*|=1$ and all agents have interval preferences.
\end{proposition}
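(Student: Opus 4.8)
The plan is to exhibit, for each $n\ge 2$, a single $\infty$-copyable (or even simple) activity with a carefully chosen set of interval preferences, and then argue by a counting/parity argument that no individually rational assignment can be Nash stable. Since $|A^*|=1$, an assignment is fully described by the size $s$ of the single non-void group (all agents outside it are on $\aes$); the agents inside must approve the pair $(a,s)$, and Nash stability additionally requires that no outside agent would approve $(a,s+1)$.

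First I would set up the instance. With interval preferences each agent $i$ has $S_i^{\downarrow a}=[\ell_i,u_i]$. The key idea is to choose the intervals so that for \emph{every} candidate group size $s\in\{0,1,\dots,n\}$ the resulting assignment fails either individual rationality or Nash stability. Concretely, I would look for a family of intervals with the property that for each $s$ there are either too few agents willing to form a group of size exactly $s$ (so one cannot realize size $s$ individually rationally), or there is an outside agent who approves $(a,s+1)$ and would deviate. The cleanest construction I would try is to make the approved sizes ``chase'' each other: e.g.\ give agents thresholds so that whenever a group of size $s$ is individually rational, strictly more than $s$ agents approve size $s+1$, forcing a deviation, while making the all-void assignment ($s=0$) unstable because at least one agent approves $(a,1)$.

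The key steps, in order, are: (i) describe the instance explicitly as a function of $n$, specifying $\ell_i,u_i$ for each $i$; (ii) observe that since there is one activity, any individually rational $\pi$ is determined by one coalition $C$ with $|C|=s$ and $s\in S_i^{\downarrow a}$ for all $i\in C$; (iii) case-analyze on $s$. For $s=0$ (the empty group), I show some agent $i$ has $1\in S_i^{\downarrow a}$, so $(a,1)\in S_i$ and $i$ wants to join, violating Nash stability. For $s\ge 1$, I show that whenever a size-$s$ individually rational group exists, the number of agents (including those currently on $\aes$) that approve $(a,s+1)$ strictly exceeds $s$, so at least one approving agent lies outside $C$ and has a profitable deviation to the non-void activity, again breaking Nash stability. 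Combining the cases shows no individually rational assignment is Nash stable, so none is Nash stable at all.

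The main obstacle will be step (i): engineering the intervals so that the chain of deviations closes up with no stable ``fixed point'' for \emph{every} value of $n$ simultaneously, rather than just for a specific small $n$. I expect the tightest point to be the boundary size near $n$, where there may not be an agent left over to deviate; I would handle this by ensuring at least one agent has an interval of the form $[2,n]$ (or similar) that keeps them interested in joining every feasible group, while the remaining agents have short or offset intervals that prevent any size from being simultaneously realizable and deviation-free. Once the instance is fixed, the case analysis itself should be routine.
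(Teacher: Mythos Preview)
The paper omits the proof of this proposition, so there is no reference argument to compare against. Your overall strategy is sound: with $|A^*|=1$ the single activity is simple, any individually rational assignment is determined by the coalition $C$ assigned to $a$, and Nash stability fails exactly when some agent outside $C$ approves $(a,|C|+1)$. A case analysis over $s=|C|\in\{0,1,\dots,n\}$ is the right framework.

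Two remarks. First, your parenthetical ``$\infty$-copyable (or even simple)'' has the difficulty backwards: with many copies the agents have more assignment options, so non-existence becomes harder, not easier; in any case $|A^*|=1$ already forces the activity to be simple, and your subsequent analysis correctly treats a single group. Second, you are over-anticipating the combinatorial difficulty of step~(i). You do not need a delicate chain that ``closes up'' at every size, nor the global counting condition ``strictly more than $s$ agents approve $(a,s+1)$''. A two-agent gadget padded with dummies already works: set
\[
S_1^{\downarrow a}=\{1\},\qquad S_2^{\downarrow a}=\{2\},\qquad S_i^{\downarrow a}=\emptyset\ \text{for } i=3,\dots,n,
\]
all of which are (possibly degenerate) intervals. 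Then for $s=0$ agent~$1$ wants to join; for $s=1$ the only IR group is $\{1\}$, and agent~$2$ (who does \emph{not} approve $(a,1)$ and hence must be outside) approves $(a,2)$ and wants to join; for $s\ge 2$ at most one agent approves $(a,s)$, so no IR group of that size exists. This disposes of all cases and sidesteps the boundary issue near $s=n$ that you flagged as the tight point. The simpler observation that makes this work is that the deviator at size $s$ need not be ``one too many''; it suffices that she does not approve $(a,s)$ and therefore cannot sit inside $C$.
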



In Definition~\ref{def:nash} an agent is allowed to join a coalition even
if the members of this coalition are opposed to this. In contrast, the notion of 
{\em individual stability} only allows a player to join a group if none
of the existing group members objects.
%
%
We remark that if all agents have increasing preferences, individual stability
is equivalent to Nash stability: no group of players would object
to having new members join. 

A related hedonic games solution concept is
{\em contractual individual stability}: under this concept, an agent
is only allowed to move from one coalition to another if neither 
the members of his new coalition nor the members of his old coalition object
to the move. However, for $\aGASP$ contractual
individual stability is equivalent to individual stability. Indeed, 
in our model no agent assigned to a non-void activity has an incentive 
to deviate, so we only need to consider deviations from singleton coalitions. 

The solution concepts discussed so far deal with individual deviations;
resistance to group deviations is captured by the notion of the {\em core}. 
One typically distinguishes between {\em strong} group deviations, which
are beneficial for each member of the deviating group, and {\em weak}
group deviations, where the deviation should be beneficial for at least
one member of the deviating group and non-harmful for others; these notions
of deviation correspond to, respectively, {\em weak} and {\em strong} core.
%
%
We note that in the context of $\aGASP$ strong
group deviations amount to players in $\pi^0$ forming a coalition in order
to engage in a non-void activity. This observation immediately implies
that every instance of $\aGASP$ has a non-empty weak core, and an outcome
in the weak core can be constructed by a natural greedy algorithm;
we omit the details due to space constraints. 



\section{Computing Good Outcomes}\label{complexity}
In this section, we consider the computational complexity of finding
a ``good'' assignment for $\aGASP$.
We mostly focus on finding perfect or maximum 
individually rational assignments; towards the end of the section, we also consider
Nash stability. Besides the general case of our problem, 
we consider special cases obtained by combining
constraints on the number and type of activities (e.g., unlimited number
of simple activities, a constant number of copyable activities, etc.)
and constraints on voters' preferences (INC, DEC, INV, etc.). 
Note that if we can find a maximum individually rational assignment, 
we can easily check if a perfect assignment exists, by looking at the size
of our maximum individually rational assignment. Thus, 
we will state our hardness results for the ``easier'' perfect assignment problem
and phrase our polynomial-time algorithms in terms of the ``harder''
problem of finding a  maximum individually rational assignment.


\subsection{Individual Rationality: Hardness Results}\label{sec:hard}
We start by presenting four $\np$-completeness results, which show that 
finding a perfect assignment is hard even under fairly strong constraints
on preferences and activities. We remark that this problem is obviously in $\np$, 
so in what follows we will only provide the hardness proofs.

Our first hardness result applies when 
all activities are simple and the agents' preferences are increasing.

\begin{theorem}\label{npc:simple-inc}
It is $\np$-complete to decide whether $\aGASP$ admits a perfect assignment, even when
all activities in $A^*$ are simple and all agents have increasing preferences.
\end{theorem}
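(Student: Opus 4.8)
The plan is to prove hardness by a reduction from \textsc{Exact Cover by 3-Sets} (X3C), a classical \npc{} problem: given a ground set $U=\{x_1,\dots,x_{3q}\}$ and a collection $\mathcal C=\{C_1,\dots,C_m\}$ of three-element subsets of $U$, decide whether some subcollection $\mathcal C'\subseteq\mathcal C$ partitions $U$. Since membership of the perfect-assignment problem in $\np$ has already been observed, it suffices to describe the reduction and argue its correctness.

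From an X3C instance I would build an $\aGASP$ instance with $n=3q$ agents, where agent $i$ represents the element $x_i$, and with $m$ activities $A^*=\{a_1,\dots,a_m\}$, one per set $C_t$. The increasing thresholds are set as follows: $\ell_i^{a_t}=3$ whenever $x_i\in C_t$, and $\ell_i^{a_t}=n+1$ otherwise, so that $S_i^{\downarrow a_t}=[3,n]$ if $x_i\in C_t$ and $S_i^{\downarrow a_t}=\emptyset$ if $x_i\notin C_t$. By construction every agent has increasing preferences with respect to every activity. Moreover any two distinct activities $a_s\neq a_t$ are inequivalent, since an element of $C_s\setminus C_t$ approves $a_s$ but not $a_t$ at group size $3$; hence each activity is simple, as required.

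For correctness I would rely on one observation: activity $a_t$ is approved only by the (at most three) agents corresponding to elements of $C_t$, which caps the size of any feasible $a_t$-group at $3$, while the uniform threshold $3$ forces every nonempty $a_t$-group to have size at least $3$. Consequently, in any perfect assignment each \emph{used} activity $a_t$ must receive exactly the three agents of $C_t$, and unused activities receive nobody. A perfect assignment places every agent into such a group, and the groups are pairwise disjoint because $\pi$ is a function; therefore the used activities correspond to a subcollection of $\mathcal C$ whose sets partition $U$, i.e.\ an exact cover. Conversely, an exact cover $\mathcal C'$ yields a perfect assignment by sending the three agents of each $C_t\in\mathcal C'$ to $a_t$. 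Thus a perfect assignment exists if and only if the X3C instance is a yes-instance.

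The step I expect to be the main obstacle is forcing groups to have exactly the right size: increasing preferences are monotone upward, so once a group is acceptable, enlarging it keeps it acceptable for the incumbents, and one cannot directly forbid oversized groups. The crux of the construction is therefore to bound group sizes \emph{indirectly}, capping them from above through scarcity of approvers (only the three elements of $C_t$ ever approve $a_t$) while the common threshold $3$ supplies the matching lower bound. Together these pin each used group to exactly the three elements of a set, which is precisely what converts the covering requirement of X3C into the partitioning requirement of a perfect assignment.
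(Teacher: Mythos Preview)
Your reduction is correct and is essentially identical to the paper's: both reduce from \textsc{X3C} with agents as ground-set elements, one activity per triple, and approval sets $S_i^{\downarrow a_t}=[3,n]$ iff $x_i\in C_t$, then use the observation that each activity has at most three approvers to force used groups to have size exactly three. Your explicit check that the activities are pairwise inequivalent (hence simple) is a small addition the paper's sketch omits, but otherwise the argument matches.
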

\begin{proof}[sketch]
We provide a reduction from {\sc Exact Cover by 3-Sets (X3C)}.
Recall that an instance of {\sc X3C} is a pair $\langle X, \calY\rangle$,
where $X = \{1, \ldots, 3q\}$ and
$\calY = \{Y_1,\ldots, Y_p\}$ is a collection of 3-element subsets of $X$;
it is a ``yes''-instance if $X$ can be covered by exactly $q$ sets from $\calY$,
and a ``no''-instance otherwise.
Given an instance $\langle X, \calY\rangle$ of {\sc X3C}, we construct
an instance of $\aGASP$ as follows.
We set $N = \{1, \ldots, 3q\}$ and
$A^* = \{a_1, \ldots, a_p\}$.
For each agent $i$, we define his vote $V_i$ so that the induced approval vote $S_i$
is given by $S_i = \{ (a_j, k) \mid i \in Y_j, k \geq 3 \}$, and let $P=(V_1, \dots, V_n)$.
Clearly, $(N, A, P)$ is an instance of $\aGASP$ with increasing preferences.
It is not hard to check that $\langle X, \calY\rangle$
is a ``yes''-instance of {\sc X3C} if and only if $(N, A, P)$
admits a perfect assignment.~\qed
\end{proof}
Our second hardness result applies to simple activities and decreasing preferences, 
and holds even if each agent is willing to share each activity with at most one other agent.

\begin{theorem}\label{npc:simple-dec}
It is $\np$-complete to decide whether $\aGASP$ admits a perfect assignment, even when
all activities in $A^*$ are simple, all agents have decreasing preferences, and, 
moreover, for every agent $i\in N$ and every alternative $a\in A^*$ we have
$S_i^{\downarrow a} \subseteq \{1, 2\}$.
\end{theorem}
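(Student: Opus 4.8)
The plan is to establish \np-hardness by reduction from \textsc{3-Sat}; membership in \np\ has already been observed.

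I would begin by reformulating what a perfect assignment means here. As all activities are simple and every nonempty projection $S_i^{\downarrow a}$ is either $\{1\}$ or $\{1,2\}$, a perfect assignment is exactly a partition of $N$ into singletons and pairs together with an injective placement of these groups onto activities, where a singleton $\{i\}$ may occupy any activity $a$ with $1\in S_i^{\downarrow a}$ and a pair $\{i,j\}$ may occupy an activity $a$ only when $2\in S_i^{\downarrow a}$ and $2\in S_j^{\downarrow a}$. If I arrange that every agent approves every activity at size one, then placing the singletons is never the bottleneck, and with exactly $p=n-K$ activities a perfect assignment exists if and only if one can select $K$ pairwise-disjoint \emph{compatible} pairs on $K$ distinct activities (the remaining $n-2K$ agents then go solo on the leftover activities). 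Thus the heart of the problem is a resource-constrained (rainbow) matching: choose many disjoint pairs, each hosted by a distinct activity that is size-$2$-compatible with both of its members. Two features shape the encoding. First, because preferences are decreasing, an agent who approves an activity is always willing to sit on it alone, so preferences can never \emph{force} a pairing; the only lever for forcing is scarcity of activities, which is why I fix $p=n-K$. Second, an activity $a$ makes \emph{every} pair inside its compatible set $T_a=\{i:2\in S_i^{\downarrow a}\}$ admissible, so the compatibility classes I can build are necessarily cliques rather than arbitrary edge sets.

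Given a formula with variables $x_1,\dots,x_m$ and clauses $C_1,\dots,C_r$, I would introduce a literal agent for every occurrence of a literal, a clause agent $d_j$ for each clause $C_j$, and, for each literal occurrence in $C_j$, a dedicated activity whose size-$2$ approvers are $d_j$ and that single literal agent, so that $d_j$ can only be absorbed by pairing with one of its literal agents. Each variable $x_t$ receives a gadget---auxiliary agents together with a precisely metered number of activities---whose only feasible covers correspond to the two truth values: in each value one polarity's literal agents are consumed internally (hence unavailable to their clauses) while the other polarity's literal agents remain free to satisfy clauses. The target $K$ and the activity budget $p=n-K$ are then set so that the $K$ required disjoint compatible pairs can be assembled precisely when some truth assignment makes every clause absorb its agent $d_j$ through a free (true) literal.

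The main obstacle is faithfulness, not either implication on its own. Because solo placement is always acceptable, the scarcity count must be balanced exactly against the number of agents, so that no agent can quietly opt out of its intended role; and because each activity's compatibility class is a clique, I must ensure that the resulting ``cross pairs''---two literal agents, or a literal agent and an auxiliary agent, pairing with each other on a shared activity---can never be parlayed into a full cover that does not correspond to a genuine satisfying assignment. Designing the variable gadgets so that exactly one polarity is forced inward, providing enough slack to re-absorb the free literal agents of over-satisfied clauses, and verifying that every spurious cross pair is either impossible or self-defeating, is where essentially all of the effort lies; once the counts are made tight, a perfect assignment exists if and only if the formula is satisfiable.
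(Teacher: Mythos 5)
Your reformulation of a perfect assignment as a collection of disjoint singletons and pairs placed injectively on activities, with each activity inducing a clique $T_a$ of size-$2$-compatible agents, is correct and is a sensible starting point. But the proposal stops exactly where the proof would have to begin: the variable gadget --- ``auxiliary agents together with a precisely metered number of activities whose only feasible covers correspond to the two truth values'' --- is never constructed, and you yourself concede that designing it, re-absorbing the free literal agents of over-satisfied clauses, and ruling out spurious cross pairs ``is where essentially all of the effort lies.'' This is not a minor verification to be deferred; it is the entire content of the reduction. The difficulty is compounded by two of your own design choices: every activity's compatible set is necessarily a clique, and you additionally stipulate that every agent approves every activity at size one, so the only forcing mechanism left is the global count $p=n-K$. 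Whether that single counting lever can rigidify the gadgets at all is genuinely unclear --- note, for instance, that if every clique $T_a$ has size two the selection problem degenerates to maximum matching and is solvable in polynomial time, so hardness must be smuggled in through larger cliques, which is precisely what creates the cross-pair problem you leave unresolved. As written, this is a plan for a reduction, not a reduction.

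For comparison, the paper avoids gadgets entirely: it reduces from scheduling $n$ jobs on $p$ unrelated machines with processing times in $\{1,2,+\infty\}$ and target makespan $2$, which is $\np$-hard by Lenstra--Shmoys--Tardos. Machine $j$ becomes activity $a_j$ and job $i$ becomes agent $i$ with $S_i^{\downarrow a_j}=\{1\}$, $\{1,2\}$, or $\emptyset$ according as $p_{ij}=2$, $1$, or $+\infty$; perfect assignments and makespan-$2$ schedules then correspond essentially verbatim. The structural lever the paper exploits --- and that you discarded --- is that an agent need not approve an activity at size one at all, so the approval sets themselves, rather than a delicate counting argument, carry the hardness. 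If you wish to salvage your route, you would either have to allow empty projections $S_i^{\downarrow a}=\emptyset$ as the paper does, or actually exhibit and verify the variable gadgets; until then the claimed equivalence with satisfiability is unsupported.
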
 
\begin{proof}[sketch]
Consider the following restricted variant of the problem of scheduling on unrelated machines.
There are $n$ jobs and $p$ machines.   
An instance of the problem is given by a collection
of numbers $\{p_{ij}\mid i=1, \dots, n, j=1, \dots, p\}$, where
$p_{ij}$ is the running time of job $i$ on machine $j$, and $p_{ij}\in\{1, 2, +\infty\}$
for every $i=1, \dots, n$ and every $j=1, \dots, p$. It is a ``yes''-instance if
there is a mapping $\rho:\{1, \dots, n\}\to\{1, \dots, p\}$ assigning jobs  
to machines so that the makespan is at most $2$,
i.e., for each $j=1, \dots, p$ it holds that $\sum_{i: \rho(i)=j}p_{ij}\le 2$.
This problem is known to be $\np$-hard
(see the proof of Theorem~5 in~\cite{LenstraST90}).   

Given an instance $\{p_{ij}\mid i=1, \dots, n, j=1, \dots, p\}$ of this problem,
we construct an instance of $\aGASP$ as follows. We set $N=\{1, \dots, n\}$,
$A^* = \{a_1, \ldots, a_p\}$. Further, for each agent $i\in N$ we construct
a vote $V_i$ so that the induced approval vote $S_i$ satisfies
$S_i^{\downarrow a_j}=\{1\}$ if $p_{ij}=2$,
$S_i^{\downarrow a_j}=\{1, 2\}$ if $p_{ij}=1$, and
$S_i^{\downarrow a_j}=\emptyset$ if $p_{ij}=+\infty$.
Clearly, these preferences satisfy the constraints in the statement of the theorem, 
and it can be shown that a perfect assignment for $(N, A, P)$ corresponds
to a schedule with makespan of at most $2$, and vice versa.~\qed
\end{proof}
Our third hardness result also concerns simple activities and decreasing preferences.
However, unlike Theorem~\ref{npc:simple-dec}, it holds even if each agent
approves of at most~$3$ activities. The proof proceeds by a reduction
from {\sc Monotone 3-SAT}.

\begin{theorem}\label{npc:simple-dec2}
It is $\np$-complete to decide whether $\aGASP$ admits a perfect assignment, even when
all activities in $A^*$ are simple, all agents have decreasing preferences, and,
moreover, for every agent $i\in N$ it holds that 
$|\{a\mid S_i^{\downarrow a}\neq \emptyset\}|\le 3$.
\end{theorem}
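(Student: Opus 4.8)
The plan is to reduce from \textsc{Monotone 3-SAT}, as the paper already announces. Recall that an instance of \textsc{Monotone 3-SAT} is a CNF formula $\varphi$ in which every clause is either \emph{positive} (all three literals unnegated) or \emph{negative} (all three literals negated); the question is whether $\varphi$ is satisfiable. The central design goal is that each agent should approve of at most three simple activities, have decreasing preferences, and that a perfect assignment should correspond exactly to a satisfying truth assignment of $\varphi$. The natural device is to build, for each variable $x$, a small agent-gadget whose only consistent individually rational completions encode the two choices ``$x$ is true'' and ``$x$ is false'', and then to attach each clause to the variables it contains so that a clause is satisfied precisely when at least one of its literals is set the way the clause needs.

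Concretely, I would introduce for each clause $C$ a dedicated ``clause activity'' $a_C$, and for each variable $x$ two ``literal activities'' $a_x^+$ and $a_x^-$ corresponding to the positive and negative settings of $x$. The decreasing preferences are exploited to enforce capacity-like constraints: by choosing thresholds $u_i^{a}\in\{1,2,3\}$ appropriately, an activity can be made to accept only a bounded number of participants, and an agent can be made to approve an activity only when the intended group size is small enough. The key tension to balance is the degree constraint $|\{a\mid S_i^{\downarrow a}\neq\emptyset\}|\le 3$: each agent may approve at most three activities, so the gadget for a variable occurring in many clauses must be spread across several agents rather than loaded onto one. I would therefore create one agent per occurrence of a variable in a clause (a ``literal agent''), plus possibly a constant number of auxiliary ``consistency agents'' per variable whose role is to force all literal agents for the same variable to agree on its truth value. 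Each literal agent approves only (i) the activity matching its own sign, (ii) the clause activity of the clause it sits in, and (iii) one consistency activity---three activities in total, meeting the constraint exactly.

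The correctness argument splits into the two standard directions. For the forward direction, given a satisfying assignment I place each literal agent on its sign-activity when that sign agrees with the truth value, and otherwise route it to its clause activity; because every clause has a satisfied literal, the clause activities can be filled without exceeding their decreasing thresholds, and every agent lands in an approved $(a,k)$ pair, yielding a perfect assignment. For the reverse direction, I assume a perfect assignment $\pi$ exists and read off a truth value for each variable from how its consistency/literal agents are grouped; the capacity thresholds force the assignment to be \emph{consistent} (a variable cannot be simultaneously used positively and negatively) and force each clause activity to be underfilled unless at least one of its literals was left free to cover it, which translates into the clause being satisfied. I would verify in passing that all preferences are genuinely decreasing (each $S_i^{\downarrow a}$ is a down-closed interval $[1,u_i^a]$) and that all activities are simple, so the constructed instance indeed lies in the claimed special case.

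The main obstacle is the simultaneous satisfaction of three competing requirements: decreasing preferences (which only give upper bounds on group sizes and hence only ``anti-monotone'' capacity constraints), the at-most-three-approved-activities restriction, and the need to enforce global consistency of each variable across all its occurrences. Enforcing consistency is delicate because with decreasing preferences one cannot directly \emph{require} an agent to join a group (there is no lower threshold to force aggregation), so the gadget must make inconsistency \emph{infeasible} rather than merely undesirable. I expect the crux of the proof to be the precise wiring of the consistency agents and the choice of thresholds so that exactly one of the two sign-activities of each variable can absorb its agents in a perfect assignment, while keeping every agent's approval set within the size-three budget.
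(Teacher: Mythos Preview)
Your plan correctly identifies \textsc{Monotone 3-SAT} as the source problem and isolates the right difficulty: with decreasing preferences you only get upper bounds on group sizes, so you cannot directly force any agent into any group, and hence cannot straightforwardly enforce that all occurrences of a variable agree on its truth value. However, the proposal stops precisely at this point. You introduce ``consistency agents'' and ``consistency activities'' but give no wiring or thresholds for them, and you say yourself that you ``expect the crux of the proof'' to be this gadget. That \emph{is} the entire reduction: without it, nothing prevents the literal agents of a variable from splitting between $a_x^+$ and $a_x^-$, and then every monotone formula---satisfiable or not---would admit a perfect assignment in your construction. The hope of using only ``a constant number of auxiliary consistency agents per variable'' is also suspect, since a variable may occur in unboundedly many clauses while each consistency agent can touch at most three activities; it is unclear how a bounded gadget could tie together arbitrarily many occurrence agents under the degree-$3$ constraint.

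For comparison, the paper's construction avoids explicit consistency gadgets altogether and is organized quite differently. It uses only one activity $v_i$ per variable (no clause activities, no separate sign activities), and the agents are the clauses: one agent per positive clause, but $q\ell+1$ cloned agents per negative clause. A positive-clause agent accepts activity $v_i$ only in groups of size at most $\mathrm{oc}(x_i)$, whereas each negative-clause clone accepts $v_i$ up to the much larger size $\mathrm{oc}(\bar x_i)(q\ell+1)$. Consistency then falls out of a counting argument rather than a gadget: the total capacity available at the small thresholds is $\sum_i \mathrm{oc}(x_i)\le q\ell$, while each negative clause contributes $q\ell+1$ agents, so in any perfect assignment at least one clone of each negative clause must sit in a ``large'' group at some $v_i$, after which all its clones can be rerouted there. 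This asymmetric cloning of one side of the formula is the missing idea in your sketch.
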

Our fourth hardness result applies even when
there is only one activity, which is $\infty$-copyable, 
and every agent approves at most two alternatives; 
however, the agents' preferences
constructed in our proof do not satisfy any of the structural constraints 
defined in Section~\ref{sec:special}. The proof proceeds by a reduction from {\sc X3C}.

\begin{theorem}\label{npc:copyable}
It is $\np$-complete to decide whether $\aGASP$ admits a perfect assignment, even when
all activities in $A^*$ are equivalent
(i.e., $A^*$ consists of a single $\infty$-copyable activity $a$) 
and for every $i\in N$ we have $|S_i^{\downarrow a}|\le 2$.
\end{theorem}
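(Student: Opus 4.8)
The plan is to reduce from \textsc{Exact Cover by 3-Sets}, starting from an instance $\langle X,\calY\rangle$ with $X=\{1,\dots,3q\}$ and $\calY=\{Y_1,\dots,Y_p\}$. Since $A^*$ consists of a single $\infty$-copyable activity $a$, an assignment is simply a partition of the agents into groups, and it is perfect exactly when every agent $i$ lands in a group whose size lies in $S_i^{\downarrow a}$. The whole difficulty is that the activity is anonymous: a group is described only by its cardinality, so I cannot directly force a group to consist of ``the three elements of $Y_j$''. The reduction must therefore use cardinalities both to encode set identities and to force, through a tight counting argument, that the selected sets form an exact cover.

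First I would introduce four kinds of agents. For each set $Y_j$ a \emph{set agent} $g_j$ together with $f_j$ \emph{filler agents}, where the $f_j$ are pairwise distinct; for each incidence of an element $x$ in a set $Y_j$ (each pair with $x\in Y_j$) an \emph{incidence agent}; and for each element $x$ a block of $w_x-(t_x-1)$ \emph{dummy agents}, where $t_x$ is the number of sets containing $x$, and the $w_x$ are chosen pairwise distinct and distinct both from $1$ and from every set-group size $s_j:=f_j+4$. The set agent $g_j$ and its fillers approve exactly the two sizes $\{s_j,1\}$; the incidence agent for $x\in Y_j$ approves $\{s_j,w_x\}$; each dummy of $x$ approves the single size $\{w_x\}$. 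Thus $|S_i^{\downarrow a}|\le 2$ for every agent $i$, and since $s_j$ and $w_x$ need not be adjacent these preferences are genuinely unstructured. All required sizes are $O(q)$, hence at most $n$, so the construction is polynomial.

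A ``chosen set $Y_j$'' corresponds to the unique possible group of size $s_j$: exactly $s_j$ agents approve the size $s_j$, namely $g_j$, its $f_j$ fillers, and the three incidence agents of $Y_j$, so such a group can only be the all-or-nothing coalition comprising precisely these $s_j$ agents, and distinctness of the $s_j$ rules out any cross-set contamination. The heart of the argument is the per-element dummy gadget. The $w_x-(t_x-1)$ dummies of $x$ approve only size $w_x$, and the only further agents approving $w_x$ are the $t_x$ incidence agents of $x$, for a total of $w_x+1$ agents. Since $w_x+1<2w_x$, at most one group of size $w_x$ can form, so all dummies of $x$ together with exactly $t_x-1$ of its incidence agents must occupy it, leaving \emph{exactly one} incidence agent of $x$ to be ``active'', i.e.\ placed in some set group. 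Hence in any perfect assignment each element has exactly one of its covering sets chosen, while forming a set group activates all three of its incidence agents; together these force the active set groups to partition $X$, which is exactly an exact cover by $q$ sets. Conversely, an exact cover yields a perfect assignment by forming the $q$ chosen set groups, placing each element's remaining $t_x-1$ incidence agents together with its dummies, and making every agent of an unchosen set a singleton of size $1$.

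The step I expect to be the main obstacle is the reverse direction, proving that \emph{every} perfect assignment arises from an exact cover, since this is where anonymity bites and one must exclude all unintended groupings. The two combinatorial facts that do the work are (i) the size $s_j$ is attainable only by the complete set group, because exactly $s_j$ agents approve it, and (ii) the size $w_x$ is realizable by at most one group, because only $w_x+1<2w_x$ agents approve it. Verifying that these two tight-counting properties simultaneously pin down a feasible partition, that an active incidence agent can be placed only by activating its whole set group (so that active agents come in complete set-triples and therefore partition $X$), and that the chosen sizes can indeed be made pairwise distinct and bounded by $n$, is the only part that requires real care; the forward direction and membership in $\np$ are routine.
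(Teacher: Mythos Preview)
Your reduction is correct and follows essentially the same architecture as the paper's proof: both encode \textsc{X3C} via three agent types---element/``dummy'' agents that approve a single element-specific size, set/``subset'' agents that approve two sizes (one for the chosen and one for the unchosen configuration), and incidence/``choice'' agents linking an element to a containing set---and both rely on the same tight-counting argument (exactly $s$ agents approve size $s$ for the set gadget; $w_x+1<2w_x$ agents approve size $w_x$ for the element gadget) to force any perfect assignment to encode an exact cover.

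The only substantive differences are cosmetic choices in the gadgets. The paper starts from the restricted \textsc{X3C} where every element lies in exactly three sets, and lets the subset-players of an \emph{unchosen} set form a group of size $z(Y)-3$ (choosing all sizes as multiples of~4 so that $z(Y)-3$ cannot collide with any other designated size). You instead allow general \textsc{X3C} and send the agents of an unchosen set to singletons of size~1. Both work; your choice arguably simplifies the distinctness bookkeeping, while the paper's choice keeps all coalition sizes large. One small point to make explicit in your write-up: you need $w_x\ge t_x$ (so that there is at least one dummy per element, which is what forces a $w_x$-group to exist) and $w_x\ge 2$; your ``all sizes are $O(q)$'' remark should really read ``polynomial in the input'', since $t_x$ can be as large as~$p$ in unrestricted \textsc{X3C}.
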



\subsection{Individual Rationality: Easiness Results}\label{sec:easy}
The hardness results in Section~\ref{sec:hard} imply that if $A^*$
contains an unbounded number of distinct activities, finding a maximum
individually rational assignment is computationally hard, even under
strong restrictions on agents' preferences (such as INC or DEC). 
Thus, we can only hope to develop an efficient algorithm for this problem
if we assume that the total number of activities is small (i.e., bounded
by a constant) or, more liberally, that the number of 
pairwise non-equivalent activities is small, 
and the agents' preferences satisfy additional constraints. 
We will now consider both of these settings, starting with the case
where $p=|A^*|$ is bounded by a constant.

\begin{theorem}\label{thm:easy-fixedK}
There exist an algorithm that given an instance of $\aGASP$
finds a maximum individually rational assignment and runs in time $(n+1)^p\poly(n)$.
\end{theorem}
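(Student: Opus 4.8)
The plan is to exploit the defining feature of \aGASP: an agent's approval of an activity depends only on that activity together with the \emph{exact} number of participants assigned to it. Consequently, if we commit in advance to the size of each group, the individual-rationality constraints decouple completely, and the remaining task reduces to a pure bipartite assignment problem that is solvable in polynomial time.

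Concretely, I would first enumerate all size profiles $\vec{k}=(k_1,\dots,k_p)$ with $k_j\in\{0,1,\dots,n\}$; there are exactly $(n+1)^p$ of them, and each profile fixes the intended cardinality $|\pi^j|=k_j$ of every activity. For a fixed profile, an individually rational assignment realizing these group sizes exists if and only if one can choose, for each $j$, a set of $k_j$ distinct agents who all approve $(a_j,k_j)$, with the chosen sets pairwise disjoint. This is precisely a feasibility question for a bipartite $b$-matching, which I would settle with a single maximum-flow computation. Build a network with a source $s$, one node per agent, one node per activity, and a sink $t$; add an arc $s\to i$ of capacity $1$ for every agent $i$, an arc $i\to a_j$ of capacity $1$ whenever $(a_j,k_j)\in S_i$, and an arc $a_j\to t$ of capacity $k_j$. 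The profile $\vec{k}$ is realizable by an individually rational assignment exactly when the maximum $s$--$t$ flow equals $\sum_{j=1}^p k_j$, and integrality of maximum flow then yields an explicit assignment saturating every demand $k_j$ (all remaining agents being sent to $\aes$).

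Since every agent placed on a non-void activity contributes to $\#(\pi)$ while void agents contribute nothing, the number of satisfied agents under a realizable profile is simply $\sum_{j=1}^p k_j$. I would therefore return the assignment associated with a realizable profile maximizing this quantity. Correctness is immediate in both directions: every individually rational assignment $\pi$ induces a realizable profile, namely $k_j=|\pi^j|$ (individual rationality guarantees $(a_j,k_j)\in S_i$ for each $i\in\pi^j$, so $\pi$ is itself an integral flow of value $\sum_j k_j$); conversely, each realizable profile is witnessed by an individually rational assignment of size $\sum_j k_j$. Ranging over all $(n+1)^p$ candidate profiles and selecting the best therefore produces a maximum individually rational assignment.

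For the running time, the enumeration yields $(n+1)^p$ profiles, and for each profile the network has $O(n+p)$ nodes, $O(np)$ arcs, and integer capacities at most $n$, so a single max-flow computation (plus reading off the assignment from the resulting integral flow) runs in time polynomial in the instance size. The overall bound is thus $(n+1)^p\poly(n)$, as claimed. I do not expect a genuine obstacle here: the sole conceptual step is recognizing that the \emph{group sizes}, rather than merely the chosen activities, are the right quantities to guess, after which the per-activity approval conditions decouple and the matching formulation becomes immediate.
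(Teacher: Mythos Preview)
Your proposal is correct and follows essentially the same approach as the paper: enumerate all group-size profiles $(k_1,\dots,k_p)\in\{0,\dots,n\}^p$, and for each profile test realizability via exactly the same max-flow network (source, agents, activities, sink; unit arcs $s\to i$, unit arcs $i\to a_j$ when $(a_j,k_j)\in S_i$, and arcs $a_j\to t$ of capacity $k_j$). The only cosmetic difference is that the paper first fixes the target value $r=\sum_j k_j$ and then enumerates profiles summing to $r$, whereas you enumerate all profiles directly and pick a realizable one maximizing $\sum_j k_j$; the underlying algorithm and analysis are the same.
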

\begin{proof}
We will check, for each $r=0, \dots, n$, if there is an individually rational
assignment $\pi$ with $\#(\pi)=r$, and output the maximum value of $r$ 
for which this is the case.
Fix an $r\in\{0, \dots, n\}$.
For every vector $(n_1, \dots, n_p)\in\{0, \dots, n\}^p$ that
satisfies $n_1+\dots+n_p=r$
we will check if there exists an assignment of agents to activities
such that for each $j=1, \dots, p$ exactly $n_j$ agents are assigned to
activity $a_j$ (with the remaining agents being assigned to the void activity), 
and each agent approves of the resulting assignment. Each check will take $\poly(n)$ steps, 
and there are at most $(n+1)^p$ vectors to be checked;
this implies our bound on the running time of our algorithm.

For a fixed vector $(n_1, \dots, n_p)$, we construct an instance of
the network flow problem as follows.
Our network has a source $s$, a sink $t$, a node $i$ for each player
$i=1, \dots, n$, and a node $a_j$
for each $a_j\in A^*$. There is an arc of unit capacity from $s$ to
each agent, and an arc of capacity $n_j$
from node $a_j$ to the sink. Further, there is an arc of unit capacity
from $i$ to $a_j$ if and only if $(a_j, n_j)\in S_i$.
It is not hard to see that an integral flow $F$ of size $r$ in this network
corresponds to an individually rational assignment of size $r$.
It remains to observe that it can be checked in polynomial time
whether a given network admits a flow of a given size.~\qed
\end{proof}
Moreover, when $A^*$ consists of a single simple activity $a$, 
a maximum individually rational assignment can be found by a straightforward greedy algorithm.

\begin{proposition}\label{prob_max_na_ss}
Given an instance $(N, A, P)$ of $\aGASP$ with $A^*=\{a\}$, we can find a maximum
individually rational assignment for $(N, A, P)$ in time $O(s\log s)$, 
where $s=\sum_{i\in N}|S_i|$.
\end{proposition}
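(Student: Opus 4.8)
The plan is to find, for each possible group size $k$, the best activity to run with exactly $k$ participants, and then pick the single value of $k$ that admits the largest feasible group. Since there is only one simple activity $a$, any individually rational assignment consists of a single group of some size $k\ge 1$ engaged in $a$ (plus everyone else idle), where every member $i$ of that group must approve the pair $(a,k)$, i.e. $k\in S_i^{\downarrow a}$. So $\#(\pi)=k$ is achievable if and only if at least $k$ agents have $k\in S_i^{\downarrow a}$, and the task reduces to computing
\[
k^* = \max\{k \mid |\{i\in N : k\in S_i^{\downarrow a}\}|\ge k\}.
\]

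First I would, for each agent $i$, read off the set $S_i^{\downarrow a}$; the total work here is $O(\sum_i |S_i|)=O(s)$, since $|A^*|=1$ means $S_i = \{a\}\times S_i^{\downarrow a}$. Next I would compute, for every size $k\in\{1,\dots,n\}$, the count $c_k = |\{i : k\in S_i^{\downarrow a}\}|$. Naively this is a histogram over all approved sizes, which can be built by scanning the $s$ approved pairs once and incrementing the appropriate counters; this is again $O(s)$ time. Then I would scan $k$ from $n$ down to $1$ and return the first $k$ with $c_k\ge k$, assigning any $k$ of the approving agents to $a$ and the rest to $\aes$; if no such $k$ exists, the maximum individually rational assignment is the all-void assignment of size $0$. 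Constructing the witness assignment is $O(n)\le O(s)$ once $k^*$ is known.

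The correctness argument is the only genuine content: I must verify that feasibility of size $k$ depends \emph{only} on whether $c_k\ge k$, independent of which agents are chosen. This follows because all members of a group of size $k$ face the \emph{same} pair $(a,k)$ — anonymity within the single activity means the constraint on each candidate member is identical, namely $k\in S_i^{\downarrow a}$ — so any $k$ agents drawn from the $c_k$ approvers form a valid individually rational group, and no group of size $k$ is possible if fewer than $k$ agents approve. Maximising $\#(\pi)$ is then exactly maximising $k$ subject to $c_k\ge k$.

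**The main obstacle** — and it is a mild one — is reconciling the claimed $O(s\log s)$ bound with the $O(s)$ counting sketch above: the logarithmic factor appears if one sorts the approved sizes (or the agents by their approved ranges) rather than bucketing them, or if the sizes are handled via a balanced structure to avoid allocating an $O(n)$ counter array when $s\ll n$. I would therefore present the algorithm as sorting the multiset of all approved group sizes across agents (size $s$) in $O(s\log s)$ time, after which the counts $c_k$ for each distinct approved $k$ are read off in a single linear pass over the sorted list, and the condition $c_k\ge k$ is checked on the fly while traversing distinct sizes in decreasing order. This matches the stated $O(s\log s)$ running time and sidesteps any dependence on $n$ beyond what $s$ already encodes.
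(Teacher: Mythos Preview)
Your proposal is correct and follows essentially the same approach as the paper: both identify that an individually rational assignment of size $k$ exists if and only if at least $k$ agents approve $(a,k)$, then sort the $s$ approved pairs and scan in decreasing order of size to find the largest feasible $k$. Your additional remark that bucketing would actually give $O(s)$ (with the $\log s$ factor arising only from avoiding an $O(n)$ array when $s\ll n$) is a nice observation that the paper does not make.
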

\begin{proof}
Clearly, $(N, A, P)$ admits an individually rational assignment $\pi$ with $\#(\pi)=k$
if and only if $|\left\{i\mid (a,k)\in S_i\right\}|\ge k$.
Let $\calR=\{(i, k)\mid (a, k)\in S_i\}$; note that $|\calR|=s$.
We can sort the elements of $\calR$ in descending order with respect 
to their second coordinate in time $O(s\log s)$. 
Now we can scan $\calR$ left to right in order to find the largest value of $k$
such that $\calR$ contains at least $k$ pairs that have $k$ as their second coordinate;
this requires a single pass through the sorted list.~\qed 
\end{proof}
Now, suppose that $A^*$ contains many activities, but most of them are equivalent to each
other; for instance, $A^*$ may consist of a single $k$-copyable activity, 
for a large value of $k$. Then the algorithm described in the proof 
of Theorem~\ref{thm:easy-fixedK} is no longer efficient, but
this setting still appears to be more tractable than the one with many distinct activities.
Of course, by Theorem~\ref{npc:copyable}, in the absence of any restrictions on the agents'
preferences, finding a maximum individually rational assignment is hard even for a single
$\infty$-copyable activity. However, we will now show that this problem becomes easy
if we additionally assume that the agents' preferences are increasing or decreasing.

Observe first that for increasing preferences having multiple copies
of the same activity is not useful: if there is an individually rational assignment
where agents are assigned to multiple copies of an activity, 
we can reassign these agents to a single copy of this activity without violating
individual rationality. Thus, we obtain the following easy corollary 
to Theorem~\ref{thm:easy-fixedK}.

\begin{corollary}\label{thm:easy-fixedK-inc}
Let $(N, A, P)$ be an instance of $\aGASP$ with increasing preferences where
$A^*$ contains at most $K$ activities that are not pairwise equivalent. 
Then we can find a maximum individually rational assignment 
for $(N, A, P)$ in time $n^K\poly(n)$.
\end{corollary}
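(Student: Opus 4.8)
The plan is to reduce Corollary~\ref{thm:easy-fixedK-inc} to Theorem~\ref{thm:easy-fixedK} by exploiting the observation, already noted just before the statement, that under increasing preferences multiple copies of an activity are never helpful. First I would group the activities of $A^*$ into equivalence classes; since there are at most $K$ classes, I pick one representative $a_1, \dots, a_{K'}$ from each class, where $K'\le K$. The key claim is that restricting attention to these $K'$ representatives does not change the size of a maximum individually rational assignment. I would justify this as follows: given any individually rational assignment $\pi$ for the full instance, consider all agents assigned to copies of some activity in a fixed equivalence class. Because preferences are increasing, each such agent approves of his representative's activity at \emph{every} group size that is at least his threshold; hence if I merge all these agents into a single copy of the class representative, the resulting (larger) group size still lies in every participating agent's approved interval $[\ell_i^a, n]$, so individual rationality is preserved and $\#(\pi)$ is unchanged.

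Having established that merging is safe, I would then simply invoke Theorem~\ref{thm:easy-fixedK} on the reduced instance $(N, A', P')$ in which $A'^* = \{a_1, \dots, a_{K'}\}$ consists of one simple activity per equivalence class. That theorem produces a maximum individually rational assignment in time $(n+1)^{K'}\poly(n) \le (n+1)^{K}\poly(n) = n^K\poly(n)$, which matches the claimed bound. The reverse direction of the correctness argument is trivial: any individually rational assignment over the reduced activity set is already a valid individually rational assignment over the full set, since the representatives genuinely belong to $A^*$.

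The step I expect to require the most care is the merging argument, i.e.\ verifying that collapsing several copies of an equivalent activity into one does not break individual rationality. The crucial point is monotonicity: for an agent $i$ with increasing preferences and threshold $\ell_i^{a}$, membership in $S_i^{\downarrow a}$ is upward-closed, so enlarging his group from some acceptable size $k\ge \ell_i^a$ to the combined size $k'\ge k$ keeps $(a,k')\in S_i$. If preferences were not increasing this would fail—hence the hypothesis is used exactly here—and this is why the corollary does not follow for arbitrary preferences, consistent with the hardness in Theorem~\ref{npc:copyable}. Everything else is bookkeeping: identifying the equivalence classes can be done by comparing projected approval sets $S_i^{\downarrow a}$ across agents, which costs only polynomial time and is dominated by the $n^K\poly(n)$ term.
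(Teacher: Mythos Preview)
Your proposal is correct and follows essentially the same route as the paper: the paper simply observes (in the sentence immediately preceding the corollary) that under increasing preferences one can merge all agents assigned to equivalent copies into a single copy without violating individual rationality, and then invokes Theorem~\ref{thm:easy-fixedK} on the resulting instance with at most $K$ activities. Your write-up spells out the merging argument and the reverse inclusion more carefully than the paper does, but the idea and the appeal to Theorem~\ref{thm:easy-fixedK} are identical.
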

If all preferences are decreasing, 
we can simply eliminate all $\infty$-copyable activities. Indeed, 
consider an instance $(N, A, P)$  of $\aGASP$ where some activity 
$a\in A^*$ is $\infty$-copyable. Then we can assign each agent $i\in N$ 
such that $(a, 1)\in S_i$ to his own copy of $a$; 
clearly, this will only simplify the problem of assigning the 
remaining agents to the activities. 

It remains to consider the case where the agents' preferences are decreasing, 
there is a limited number of copies of each activity, and the number of distinct
activities is small. While we do not have a complete solution for this case, 
we can show that in the case of a single $k$-copyable activity a natural greedy
algorithm succeeds in finding a maximum individually rational assignment.

\begin{theorem}\label{thm:k-copyable-greedy}
Given a decreasing instance $(N, A, P)$ of $\aGASP$ where $A^*$ consists
of a single $k$-copyable activity (i.e., $A^*=\{a_1, \dots, a_k\}$, 
and all activities in $A^*$ are pairwise equivalent), 
we can find a maximum individually rational assignment in time $O(n\log n)$.
\end{theorem}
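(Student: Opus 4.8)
The plan is to reduce the problem to a one-dimensional packing question and solve it by a single greedy sweep. Since all activities in $A^*$ are pairwise equivalent and preferences are decreasing, each agent $i$ is characterized by a single threshold $u_i$ with $S_i^{\downarrow a_j}=[1,u_i]$ for every $j$; hence $i$ may be placed into any copy of the activity whose group has size at most $u_i$. An individually rational assignment therefore amounts to choosing at most $k$ groups and filling each group of size $m$ with agents of threshold at least $m$, and maximizing $\#(\pi)$ is the same as maximizing the total number of placed agents. First I would sort the agents in non-increasing order of threshold, writing $u_{(1)}\ge\cdots\ge u_{(n)}$; this costs $O(n\log n)$.

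Next I would establish a normal-form characterization: a feasible assignment placing $t$ agents exists if and only if the top $t$ agents can be partitioned into at most $k$ \emph{contiguous} blocks of the sorted sequence so that a block ending at position $p$ has size at most $u_{(p)}$. The ``if'' direction is immediate. For ``only if'', take any feasible assignment with group sizes $m_1\ge\cdots\ge m_g$, $g\le k$; since every member of group $\ell$ has threshold at least $m_\ell\ge m_j$ whenever $\ell\le j$, at least $m_1+\cdots+m_j$ agents have threshold $\ge m_j$, so the contiguous block ending at position $s_j=m_1+\cdots+m_j$ satisfies $u_{(s_j)}\ge m_j$ and is feasible. This simultaneously shows that we may assume the placed agents are the top $t$ and that the blocks are contiguous.

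I would then recast the block question as a jump game on positions $0,1,\dots,n$: from position $q$ one may start a new block and close it at any $p>q$ with $u_{(p)}\ge p-q$. Because $u_{(p)}$ is non-increasing while $p-q$ increases, the quantity $p-u_{(p)}$ is strictly increasing, so the reachable endpoints from $q$ form an interval $(q,P(q)]$ with $P(\cdot)$ non-decreasing. Maximizing the number of placed agents is exactly maximizing the reach after at most $k$ jumps, and a standard exchange argument (greedy-farthest dominates any optimum position-by-position, using monotonicity of $P$) shows that always extending the current block maximally is optimal. Concretely, the greedy sweeps the sorted list once, keeping an agent in the current group as long as the resulting group size does not exceed that agent's threshold, then opening a new group, and stopping after at most $k$ groups; since the size-minus-threshold deficit only grows within a block, each block is extended to exactly $P(q)$ in one pass, for total running time $O(n\log n)$.

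The main obstacle is the optimality proof, not the algorithm: one must argue that maximizing each group greedily never forecloses a better global solution. The contiguous-block normal form is the key lever, as it turns an arbitrary feasible assignment into a canonical one on the sorted prefix and reduces optimality to the monotone jump game, where greedy-farthest is provably optimal. Edge cases—agents with $u_i=0$ (who sit at the end of the order and are never added, as any block has size at least $1$) and instances where all agents are placed before $k$ groups are used—are handled automatically by the sweep.
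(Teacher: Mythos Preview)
Your proposal is correct and arrives at exactly the same greedy algorithm as the paper, with the same running time. The two proofs also share the key structural insight: in an optimal solution one may assume the assigned agents form a prefix of the threshold-sorted order and that each coalition is a contiguous block of that prefix.

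Where you diverge is in how optimality is established. The paper argues by an explicit exchange: starting from an arbitrary maximum individually rational assignment, it first swaps agents so that the assigned ones form a prefix, then repeatedly swaps pairs (tracked by a potential function) so that coalitions become contiguous and ordered by size, and finally moves agents one by one from the optimal blocks into the greedy blocks, showing the latter are always at least as large. Your route is more abstract: you prove the contiguous-block normal form by a direct counting argument (at least $m_1+\cdots+m_j$ agents have threshold $\ge m_j$), then recast block formation as a monotone jump game on positions $0,\dots,n$ with reach function $P(q)=\max\{p>q:u_{(p)}\ge p-q\}$, and invoke the standard greedy-farthest lemma. Your argument is cleaner and more modular---the strict monotonicity of $p-u_{(p)}$ immediately gives the interval structure and the non-decreasing $P$, after which optimality is a two-line induction---while the paper's exchange argument is more hands-on but entirely self-contained.
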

\begin{proof}
Since all activities in $A^*$ are pairwise equivalent, we can associate
each agent $i\in N$ with a single number $u_i\in\{0, \dots, n\}$, which
is the maximum size of a coalition assigned to a non-void activity 
that he is willing to be a part of.
We will show that our problem can be solved by a simple greedy
algorithm. Specifically, we sort the agents in non-increasing order of $u_i$s. 
From now on, we will assume without loss of generality that $u_1\ge\dots\ge u_n$.
To form the first group, we find the largest value of $i$ such that
$u_i\ge i$, and assign agents $1, \dots, i$
to the first copy of the activity. In other words, we continue adding
agents to the group as long as the agents are happy to join.
We repeat this procedure with the remaining agents until either $k$
groups have been formed
or all agents have been assigned to one of the groups, whichever
happens earlier.

Clearly, the sorting step is the bottleneck of this procedure, so the
running time of our algorithm is $O(n\log n)$.
It remains to argue that it produces a maximum individually rational
assignment. To show this, we start with an arbitrary
maximum individually rational assignment $\pi$ and transform it into
the one produced by our algorithm
without lowering the number of agents that have been assigned to a non-void
activity. We will assume without loss of generality
that $\pi$ assigns all $k$ copies of the activity (even though this is
is not necessarily the case for the greedy algorithm).

First, suppose that $\pi(i) = a_\emptyset$, $\pi(j) = a_\ell$ 
for some $i<j$ and some $\ell\in\{1, \dots, k\}$.
Then we can modify $\pi$ by setting $\pi(i)=a_\ell$,
$\pi(j)=a_\emptyset$. Since $i<j$ implies $u_i\ge u_j$, the modified
assignment is individually rational. By applying this operation
repeatedly, we can assume that the set of agents
assigned to a non-void activity forms a contiguous prefix of 
$1, \dots, n$.

Next, we will ensure that for each $\ell=1, \dots, k$ the group of
agents that are assigned to $a_\ell$ forms
a contiguous subsequence of $1, \dots, n$. 
To this end, let us sort the coalitions in $\pi$ according to their size, 
from the largest to the smallest, breaking ties arbitrarily.
That is, we reassign the $k$ copies of our activity to coalitions in $\pi$ so that
$\ell<r$ implies $|\pi^\ell|\ge |\pi^r|$. Now, suppose that there exist
a pair of players $i, j$ such that $i<j$, $\pi(i)=a_\ell$, $\pi(j)=a_r$,
and $\ell>r$ (and hence $|\pi^\ell|\le |\pi^r|$). We have
$u_j\ge |\pi^r|\ge |\pi^\ell|$, $u_i\ge u_j \ge |\pi^r|$, 
so if we swap $i$ and $j$ (i.e., modify $\pi$ by setting
$\pi(j)=a_\ell$, $\pi(i) = a_r$), the resulting assignment
remains individually rational. Observe that every such swap
increases the quantity $\Sigma = \sum_{t=1}^k\sum_{s\in\pi^t}(s\cdot t)$
by at least $1$: prior to the swap, the contribution of $i$ and $j$
to $\Sigma$ is $i\ell+jr$, ans after the swap it is $ir+j\ell > i\ell+jr$.
Since for any assignment we have $\Sigma\le kn(n+1)/2$, eventually
we arrive to an assignment where no such pair $(i, j)$ exists.
At this point, each $\pi^\ell$, $\ell=1, \dots, k$, forms 
a contiguous subsequence of $1, \dots, n$, 
and, moreover, $\ell<r$ implies $i\le j$ for all $i\in\pi^\ell$, $j\in\pi^r$.

Now, consider the smallest value of $\ell$ such that $\pi^\ell$
differs from the $\ell$-th coalition constructed by the greedy
algorithm (let us denote it by $\gamma^\ell$),
and let $i$ be the first agent in $\pi^{\ell+1}$. The description of
the greedy algorithm implies that $\pi^\ell$
is a strict subset of $\gamma^\ell$ and agent $i$ belongs to
$\gamma^\ell$. Thus, if we modify $\pi$ by moving
agent $i$ to $\pi^\ell$, the resulting allocation remains individually
rational (since $i$ is happy in $\gamma^\ell$).
By repeating this step, we will gradually transform $\pi$ into the
output of the greedy algorithm (possibly discarding
some copies of the activity). This completes the proof.~\qed\vspace{-2mm}
\end{proof}
The algorithm described in the proof of Theorem~\ref{thm:k-copyable-greedy}
can be extended to the case where we have one $k$-copyable activity $a$
and one simple activity $b$, and the agents have decreasing preferences 
over both activities. For each $s=1, \dots, n$ we will look for
the best solution in which $s$ players are assigned to $b$;
we will then pick the best of these $n$ solutions.
For a fixed $s$ let $N_s=\{i\in N\mid (b, s)\in S_i\}$. 
If $|N_s|<s$, no solution for this value of $s$ exists.
Otherwise, we have to decide which size-$s$
subset of $N_s$ to assign to $b$.
It is not hard to see that we should simply pick the agents
in $N_s$ that have the lowest level of tolerance for $a$,
i.e., we order the agents in $N_s$ by the values of $u^a_i$
from the smallest to the largest, and pick the first $s$ agents.
We then assign the remaining agents to copies of $a$
using the algorithm from the proof of Theorem~\ref{thm:k-copyable-greedy}.
Indeed, any assignment can be transformed into
one of this form by swapping agents so that the individual rationality
constraints are not broken. It would be interesting to see if this idea
can be extended to the case where instead of a single simple activity
$b$ we have a constant number of simple activities 
or a single $k'$-copyable activity.

We conclude this section by giving an $O(\sqrt{n})$-approximation
algorithm for finding a maximum individually rational assignment
in $\aGASP$ with a single $\infty$-copy\-able activity.

\begin{theorem}
  There exists a polynomial-time algorithm that given an instance $(N, A, P)$ of $\aGASP$
  where $A^*$ consists of a single $\infty$-copyable activity $a$, 
  outputs an individually rational assignment $\pi$ with 
  $\#(\pi)=\Theta(\frac{1}{\sqrt{n}})\#(N, A, P)$.
\end{theorem}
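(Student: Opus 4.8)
The plan is to analyze the following simple greedy algorithm, which only ever uses groups of one common size. For each candidate size $k\in\{1,\dots,n\}$, let $m_k=|\{i\in N : (a,k)\in S_i\}|$ be the number of agents who accept a group of size $k$. Using size $k$ alone, we can pack $\lfloor m_k/k\rfloor$ disjoint groups, each of size exactly $k$ and each consisting solely of approvers of size $k$; this assignment is individually rational and covers $c_k:=k\lfloor m_k/k\rfloor$ agents. The algorithm computes $c_k$ for every $k$, selects a maximizing $k$, and outputs the corresponding assignment. Each $m_k$ is computed in polynomial time and there are only $n$ candidate sizes, so the running time is polynomial and the output is individually rational by construction. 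It remains to bound $\mathrm{ALG}:=\max_k c_k$ from below against $\mathrm{OPT}:=\#(N,A,P)$.

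First I would fix a maximum individually rational assignment and let $g_k$ be the number of its groups of size $k$, so that $\mathrm{OPT}=\sum_k k\,g_k$. Since distinct groups consist of disjoint agents, all of whom accept their own size, we have $k\,g_k\le m_k$ for every $k$, and hence the key inequality $c_k=k\lfloor m_k/k\rfloor\ge k\,g_k$. Splitting sizes at the threshold $\sqrt{n}$, I write $\mathrm{OPT}=\mathrm{OPT}_S+\mathrm{OPT}_L$ with $\mathrm{OPT}_S=\sum_{k\le\sqrt n}k\,g_k$ (small groups) and $\mathrm{OPT}_L=\sum_{k>\sqrt n}k\,g_k$ (large groups). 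The whole argument rests on treating these two pieces by opposite mechanisms, and this dichotomy is the crux of the proof.

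For the small part, there are at most $\sqrt n$ distinct sizes $k\le\sqrt n$, so $\mathrm{OPT}_S$ is a sum of at most $\sqrt n$ terms and some such $k$ satisfies $k\,g_k\ge\mathrm{OPT}_S/\sqrt n$, giving $\mathrm{ALG}\ge c_k\ge k\,g_k\ge\mathrm{OPT}_S/\sqrt n$. For the large part, the point is that a single large group already suffices: if $\mathrm{OPT}_L>0$ then the optimum contains a group of some size $k_0>\sqrt n$, so $m_{k_0}\ge k_0$ and $c_{k_0}\ge k_0>\sqrt n\ge\mathrm{OPT}_L/\sqrt n$, where the last step uses the trivial bound $\mathrm{OPT}_L\le n$. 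Thus $\mathrm{ALG}\ge\mathrm{OPT}_L/\sqrt n$ as well. Combining, $2\,\mathrm{ALG}\ge(\mathrm{OPT}_S+\mathrm{OPT}_L)/\sqrt n=\mathrm{OPT}/\sqrt n$, i.e. $\#(\pi)=\mathrm{ALG}\ge\mathrm{OPT}/(2\sqrt n)$, which is the desired $\Omega(1/\sqrt n)$ guarantee. I expect the large-group case to be the only mildly delicate step, since one must notice that no cleverness beyond a single maximal group is needed there precisely because $\mathrm{OPT}_L$ can never exceed $n$.

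Finally, to justify $\Theta(1/\sqrt n)$ rather than merely $\Omega(1/\sqrt n)$, I would exhibit a matching instance. Take $m=\lfloor\sqrt n\rfloor$ and, for each $k\in\{1,\dots,m\}$, introduce exactly $k$ agents whose only accepted size is $k$. A perfect-on-each-block assignment packs the size-$k$ block into one group, so $\mathrm{OPT}=\sum_{k=1}^m k=\Theta(n)$, whereas any single-size solution covers only $c_k=k\le m=\Theta(\sqrt n)$ agents. Hence the algorithm attains exactly $\Theta(\mathrm{OPT}/\sqrt n)$ on this family, showing the analysis is tight and completing the proof.
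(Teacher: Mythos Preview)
Your proof is correct and takes a genuinely different route from the paper. The paper's algorithm is an \emph{iterative} greedy: starting from all agents inactive, repeatedly extract the largest subset of the remaining agents that forms an individually rational coalition, and stop when no further coalition can be formed. Its analysis is a covering argument: if the first (hence largest) greedy coalition $B_1$ has size less than $\sqrt{n}$, then so does every coalition in the optimum $\pi^*$; one then partitions the active coalitions of $\pi^*$ into classes $\mathcal{C}^1,\dots,\mathcal{C}^s$ according to which greedy coalition $B_i$ they first intersect, observes $|\mathcal{C}^i|\le|B_i|$, and sums to get $\#(\pi^*)\le\sqrt{n}\cdot\#(\pi)$. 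In contrast, your algorithm commits to a \emph{single} group size and your analysis is a pigeonhole/threshold split at $\sqrt{n}$. Your approach is arguably simpler and more self-contained; you also supply a tightness family, which the paper's proof omits despite the $\Theta$ in the statement. On the other hand, the paper's iterative greedy achieves the slightly better constant $1/\sqrt{n}$ (versus your $1/(2\sqrt{n})$) and can be strictly stronger in practice: on your own tightness instance it recovers the optimum exactly, while your single-size algorithm is stuck at $\Theta(\sqrt{n})$ agents. One cosmetic point: your tightness family has $m(m{+}1)/2\approx n/2$ agents rather than $n$; pad with agents whose approval set is empty (or just rephrase in terms of the instance's own size) to make this airtight.
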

\begin{proof}
  We say that an agent $i$ is {\em active} in $\pi$ if $\pi(i)\neq\aes$;
  a coalition of agents is said to be {\em active} if it is assigned 
  to a single copy of $a$.
  We construct an individually rational assignment $\pi$ iteratively,  
  starting from the assignment where no agent is active. 
  Let $N^*=\{i\mid \pi(i)=\aes\}$ be the current set of inactive agents
  (initially, we set $N^*=N$). 
  At each step, we find the largest subset of $N^*$ that can be assigned
  to a single fresh copy of $a$ without breaking the individual rationality constraints, 
  and append this assignment to $\pi$. 
  We repeat this step until the inactive agents cannot form another coalition.
  
  Now we compare the number of active agents in $\pi$ with the number of active agents 
  in a maximum individually rational assignment $\pi^*$. 
  To this end, let us denote the active coalitions of $\pi$
  by $B_1,\dots,B_s$, where $|B_1|\ge\ldots\ge|B_s|$. 
  If $|B_1|\ge \sqrt{n}$, we are done, so assume that this is not the case.  
  Note that since $B_1$ was chosen greedily, 
  this implies that $|C|\le \sqrt{n}$ for every active coalition $C$ in $\pi^*$.

  Let $\mathcal{C}$ be the set of active coalitions in $\pi^*$.
  We partition $\mathcal{C}$ into $s$ groups 
  by setting $\mathcal{C}^1 = \{C\in \mathcal{C}\mid C\cap B_1\neq\emptyset\}$
  and $\mathcal{C}^i = \{C\in\mathcal{C}\mid C\cap B_i\neq\emptyset, 
                         C\not\in \mathcal{C}^j\text{ for }j<i\}$ for $i=2, \dots, s$.
  Note that every active coalition $C\in\pi^*$ intersects some coalition in $\pi$:
  otherwise we could add $C$ to $\pi$. Therefore, each active coalition in $\pi^*$
  belongs to one of the sets $\mathcal{C}^1, \dots, \mathcal{C}^s$. Also, 
  by construction, the sets $\mathcal{C}^1, \dots, \mathcal{C}^s$ are pairwise disjoint. 
  Further, since the coalitions in $\mathcal{C}^i$ are pairwise disjoint
  and each of them intersects $B_i$, we have $|\mathcal{C}^i|\le |B_i|$ for each $i=1, \dots, s$.
  Thus, we obtain 
  \begin{eqnarray*}
  \#(\pi^*) & = & \sum_{i=1, \dots, s} \sum_{C\in\mathcal{C}^i}|C| 
             \le  \sum_{i=1, \dots, s} \sum_{C\in\mathcal{C}^i}\sqrt{n}\\ 
            &\le& \sum_{i=1, \dots, s} |\mathcal{C}^i|\sqrt{n} 
             \le  \sum_{i=1, \dots, s} |B_i|\sqrt{n} 
             \le \#(\pi)\sqrt{n}. \qquad\qquad\qquad\qed
  \end{eqnarray*}
\end{proof}


\subsection{Nash Stability}\label{sec:nash}

We have shown that $\aGASP$ does not not always admit a Nash stable assignment
(Proposition~\ref{prop:ns-nonexistence}). In fact, 
it is difficult to determine whether a Nash stable assignment exists.
The proofs of the next two results are omitted due to space constraints.

\begin{theorem}\label{nash-npc}
It is $\np$-complete to decide whether $\aGASP$ admits a Nash stable assignment. 
\end{theorem}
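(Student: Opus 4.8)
The plan is to reduce from {\sc Exact Cover by 3-Sets} ({\sc X3C}); membership in \np\ is immediate, since given an assignment one can verify individual rationality and check, for each void agent $i$ and each activity $a_j$, whether $(a_j,|\pi^j|+1)\in S_i$, all in polynomial time. Starting from an instance $\langle X,\calY\rangle$ with $X=\{1,\dots,3q\}$ and $\calY=\{Y_1,\dots,Y_p\}$, I would build an \aGASP\ instance with two kinds of agents and two kinds of (simple) activities: an \emph{element agent} $e_x$ and a \emph{companion agent} $c_x$ for each $x\in X$, a \emph{set activity} $a_j$ for each $Y_j$, and a \emph{trap activity} $t_x$ for each $x$. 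The preferences are trichotomous and in fact interval: $e_x$ approves $(a_j,3)$ for every $j$ with $x\in Y_j$ and approves $(t_x,2)$, while $c_x$ approves only $(t_x,1)$. Thus a set activity can be individually rational only at size exactly $3$, occupied precisely by the three element agents of $Y_j$, so the activated set activities in any individually rational assignment always form a family of pairwise disjoint $3$-sets.

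The heart of the construction is the trap gadget $\{e_x,c_x,t_x\}$, which is a copy of the two-agent non-existence instance underlying Proposition~\ref{prop:ns-nonexistence}: $c_x$ wants to be alone on $t_x$, while $e_x$ wants exactly one partner there. I would show that this gadget can be stabilized if and only if $e_x$ is ``pinned'' inside some set activity. Indeed, if $e_x$ is not placed in a set, then (being unable to sit on $t_x$ at its only acceptable size $2$, as that would violate $c_x$'s individual rationality) it must be void; in that case either $c_x$ is void and wants to start $t_x$ alone, or $c_x$ sits alone on $t_x$ and $e_x$ wants to join to reach size $2$ --- in both cases the assignment is not Nash stable. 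Conversely, if $e_x$ is pinned inside a set activity, placing $c_x$ alone on $t_x$ leaves no void agent interested in $t_x$, so this part of the assignment is stable.

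Combining the gadgets, I would argue that a Nash stable assignment exists iff every element agent is pinned in a set activity, i.e.\ iff the activated (disjoint) $3$-sets cover all of $X$, which for $3q$ elements is exactly an exact cover. For the forward direction, an exact cover lets me place each $e_x$ in its unique covering set (size $3$) and each $c_x$ alone on $t_x$, leaving \emph{no} void agent at all and hence a trivially Nash stable assignment. For the converse, if there is no exact cover then every individually rational assignment leaves some $e_x$ outside all set activities, and the trap analysis above forces instability. I expect the main obstacle to be precisely the design and verification of the trap gadget: one must check that it admits no ``escape'' configuration that is stable while $e_x$ is free, and that the gadgets interact only through the shared element agents, so that simultaneous local stability of all traps is equivalent to the global covering condition. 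As a by-product, the reduction already yields hardness for interval preferences with all activities simple.
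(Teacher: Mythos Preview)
Your reduction is correct. The trap gadget works exactly as you describe: in any Nash stable assignment, $c_x$ must sit alone on $t_x$ (otherwise $c_x$ is either void and wants to start $t_x$, or is on $t_x$ at size~$2$ and violates individual rationality), and then $e_x$ cannot be void (it would want to join $t_x$); since $e_x$ cannot be on $t_x$ either, it must occupy a set activity, forcing the activated sets to form an exact cover. The converse direction is immediate.

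The paper takes a different and somewhat more modular route: instead of reducing directly from {\sc X3C}, it reduces from the problem of deciding whether an \aGASP\ instance admits a \emph{perfect} assignment (already shown \np-complete in Theorem~\ref{npc:simple-inc}). Given any instance $(N,A,P)$, the paper adds a single fresh activity $b$ and a single fresh agent $n{+}1$, sets $S'_i=S_i\cup\{(b,1)\}$ for all $i\in N$ and $S'_{n+1}=\{(b,2)\}$, and argues that the new instance has a Nash stable assignment iff the old one has a perfect assignment. The core gadget is the same as yours --- one agent wanting size~$1$ and another wanting size~$2$ on a shared activity --- but the paper deploys it \emph{once globally} rather than once per element. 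This buys conciseness and generality (the reduction inherits any structural restrictions under which perfect assignment is hard), whereas your direct construction is self-contained and makes the interval/simple-activity corollary explicit without having to trace through the earlier theorems.
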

However, if agents' preferences satisfy INC, DEC, or MIX, 
a Nash stable assignment always exists and can be computed efficiently.

\begin{theorem}\label{thm:nash-in-dec-mix}
If $(N, A, P)$ is an instance of $\aGASP$ that is increasing, decreasing, or mixed
increasing-decreasing, a Nash stable assignment always exists and can be found in polynomial time.
\end{theorem}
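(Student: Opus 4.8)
The plan is to prove the most general (MIX) case directly, since taking $A^- = \emptyset$ recovers INC and $A^+ = \emptyset$ recovers DEC. I would run a local-improvement dynamics on \emph{individually rational} assignments, starting from the all-void assignment (which is trivially individually rational), and show it converges to a Nash stable assignment while preserving individual rationality throughout. I allow exactly two kinds of moves. A \emph{clean join}: a void agent $i$ joins a group $\pi^j$ of current size $g_j=|\pi^j|$, provided that at the new size $g_j+1$ everyone — the newcomer included — still approves; concretely, for $a_j\in A^+$ this means $g_j+1\ge \ell_i^{a_j}$, and for $a_j\in A^-$ it means $u_k^{a_j}\ge g_j+1$ for the newcomer and every incumbent $k$. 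A \emph{DEC exchange}: inside a group on some $a_j\in A^-$ of size $g_j$, a void agent $i$ with $u_i^{a_j}>g_j$ replaces an incumbent $k$ with $u_k^{a_j}=g_j$, leaving the group size unchanged at $g_j$ and sending $k$ to the void activity.

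The increasing activities are the easy half: adding a member to a group on $a_j\in A^+$ never hurts the incumbents, since $S_k^{\downarrow a_j}=[\ell_k^{a_j},n]$ is upward closed, so \emph{every} join on an increasing activity is automatically clean and strictly raises $\#(\pi)$. Hence on a purely increasing instance the dynamics is just ``let any willing void agent join,'' which stops after at most $n$ joins at an assignment where no void agent wants to join any group — exactly Nash stability (matching the earlier remark that under increasing preferences individual stability coincides with Nash stability).

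The decreasing activities are the crux, and the main obstacle, because there $S_i^{\downarrow a_j}=[1,u_i^{a_j}]$ is downward closed: to deter a would-be joiner of high tolerance one needs a \emph{large} group, yet enlarging the group can break the individual rationality of incumbents whose tolerance equals $g_j$ exactly. This is precisely why a ``number of active agents'' potential alone does not suffice and why the exchange move is needed. The key accounting is the potential $\Psi(\pi)=\#(\pi)+\sum_{i:\,\pi(i)\in A^-}u_i^{\pi(i)}$. Every clean join raises $\#(\pi)$ by one (a new agent becomes active, none is removed); every DEC exchange leaves $\#(\pi)$ fixed but raises the tolerance sum, since it trades an incumbent of tolerance $g_j$ for an agent of tolerance at least $g_j+1$. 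No move decreases either summand, so $\Psi$ strictly increases by at least one per step; as $\Psi\le n+n^2$, the dynamics halts after $O(n^2)$ moves, each computable in polynomial time.

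It then remains to verify that a halted assignment is Nash stable, by case analysis on a supposed violation: if $\pi$ is individually rational but some void agent $i$ would approve joining a group on $a_j$, then either $a_j\in A^+$, or $a_j\in A^-$ with all incumbents tolerating size $g_j+1$ — in both cases a clean join is available; or $a_j\in A^-$ with some incumbent $k$ of tolerance exactly $g_j<u_i^{a_j}$ — in which case a DEC exchange is available. Thus a move always exists, contradicting termination. The two points that need care in the full write-up are that each move preserves individual rationality (immediate for clean joins by the relevant closure, and for exchanges because the size is unchanged and the entrant tolerates $g_j$) and that exchanges cannot trigger an endless cycle — the displaced agent $k$ has $u_k^{a_j}=g_j$ and so never wants to rejoin that same group, and in any case $\Psi$ strictly increases on every move.
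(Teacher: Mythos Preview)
Your argument is correct, but the route differs from the paper's. The paper treats the three cases separately and composes them. For INC it runs exactly your ``let any willing void agent join'' dynamics. For DEC it uses a direct greedy construction rather than a local dynamics: it processes the activities $a_1,\dots,a_p$ one by one; at step $j$ it computes, among the still-unassigned agents, the largest $k$ with at least $k$ agents approving $(a_j,k)$, and fills $\pi^j$ to size exactly $k$ by first taking \emph{all} agents with $u^{a_j}\ge k+1$ (there are at most $k$ of them, by maximality of $k$) and then topping up with agents having $u^{a_j}=k$. This immediately guarantees that no remaining agent will ever want to join $a_j$, so Nash stability follows once all activities are processed. For MIX the paper simply runs the DEC algorithm on $A^-$ first and then the INC dynamics on $A^+$ for the leftover agents.

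Your approach replaces the constructive DEC step by the clean-join/exchange dynamics with the potential $\Psi(\pi)=\#(\pi)+\sum_{i:\pi(i)\in A^-}u_i^{\pi(i)}$. This is a genuinely different (and rather elegant) termination argument: the paper never needs a potential because its DEC phase is a single forward pass with no backtracking, and its final Nash-stability check is by construction rather than by contradiction on the availability of a move. What your route buys is uniformity---one mechanism handles INC, DEC, and MIX simultaneously without a two-phase composition---at the cost of a looser $O(n^2)$ move bound and a slightly less explicit description of the output. What the paper's route buys is a cleaner per-activity invariant (``after step $j$, no void agent wants $a_j$'') and a more transparent structure of the final assignment on the decreasing side.
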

Moreover, the problem of finding a Nash stable assignment
that maximizes the number of agents assigned to a non-void activity
admits an efficient algorithm 
if $A^*$ consists of a single simple activity. 

\begin{theorem}\label{prop:max-nash-1act}
There exist a polynomial-time algorithm that given an instance $(N, A, P)$ of $\aGASP$
with $A^*=\{a\}$ finds a Nash stable assignment maximizing 
the number of agents assigned to a non-void activity, 
or decides that no Nash stable assignment exists.
\end{theorem}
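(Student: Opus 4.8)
The plan is to exploit the fact that a single simple activity makes the solution space essentially one-dimensional. Since $A^*=\{a\}$, any assignment $\pi$ is fully determined by the coalition $C=\pi^1=\{i\mid\pi(i)=a\}$ placed on $a$, and writing $k=|C|$ the quantity to be maximized is exactly $k$. First I would translate the two requirements into set-containment conditions on $C$. For $k\ge 1$, let $N_k:=\{i\in N\mid k\in S_i^{\downarrow a}\}$ denote the agents who approve of being in a group of size $k$ (with $N_0=N_{n+1}=\emptyset$, since $S_i^{\downarrow a}\subseteq\{1,\dots,n\}$). Individual rationality says every $i\in C$ approves the realized size $k$, i.e.\ $C\subseteq N_k$. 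As noted in the discussion following Definition~\ref{def:nash}, once $\pi$ is individually rational the only candidate deviators are the agents on the void activity; Nash stability therefore says that no $i\notin C$ wants to join, i.e.\ $(a,k+1)\notin S_i$ for every $i\notin C$, which is precisely $N_{k+1}\subseteq C$. (For $k=0$ this reduces to $N_1=\emptyset$, and the same formulas cover it.)

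Hence a Nash stable assignment in which exactly $k$ agents are active exists if and only if there is a set $C$ sandwiched as $N_{k+1}\subseteq C\subseteq N_k$ with $|C|=k$. I would then observe that such a $C$ exists iff (i) $N_{k+1}\subseteq N_k$, (ii) $|N_{k+1}|\le k$, and (iii) $|N_k|\ge k$, and that when these hold a witness is obtained by taking $N_{k+1}$ and padding it with any $k-|N_{k+1}|$ agents drawn from $N_k\setminus N_{k+1}$ (there are enough by (i) and (iii)). This gives a clean, checkable characterization of the sizes $k$ that are realizable by a Nash stable assignment.

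The algorithm then iterates $k$ from $n$ down to $0$, tests conditions (i)--(iii), and returns the first (hence largest) feasible $k$ together with the explicit coalition $C$ constructed above; if no value of $k$ passes, it reports that no Nash stable assignment exists, which is sound since every Nash stable assignment has some active size $k\in\{0,\dots,n\}$ satisfying the characterization. The sets $N_k$ and their cardinalities can be precomputed in polynomial time, and each of the $n+1$ tests costs $O(n)$, so the whole procedure is polynomial; maximality of the returned $k$ gives the maximum number of agents on a non-void activity.

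The main obstacle is purely one of care rather than difficulty: pinning down the Nash stability condition exactly, in particular recognizing that the only profitable deviations are from the void activity, and that the two-sided containment $N_{k+1}\subseteq C\subseteq N_k$ forces the compatibility requirement $N_{k+1}\subseteq N_k$. This last point is the one easy-to-miss subtlety, since it can fail even when the cardinality bounds $|N_{k+1}|\le k\le|N_k|$ hold; once the characterization is fixed, correctness and the running-time bound are immediate.
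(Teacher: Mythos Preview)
Your proposal is correct and follows essentially the same approach as the paper: iterate over all candidate sizes $k$ and test a simple feasibility condition. Your containment formulation $N_{k+1}\subseteq C\subseteq N_k$ is a slightly cleaner packaging of the paper's argument; the paper instead partitions agents into $U_1=\{i\mid k\in S'_i,\ k+1\notin S'_i\}$, $U_2=\{i\mid k\notin S'_i,\ k+1\in S'_i\}$, $U_3=\{i\mid k\in S'_i,\ k+1\in S'_i\}$ and checks $U_2=\emptyset$, $|U_3|\le k$, $|U_1|+|U_3|\ge k$, which are exactly your conditions (i)--(iii) since $N_k=U_1\cup U_3$ and $N_{k+1}=U_2\cup U_3$.
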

\begin{proof}
For each $k=n, \dots, 0$, we will check if there exists a Nash
stable assignment $\pi$ with $\#(\pi)=k$, and output the largest value of 
$k$ for which this is the case. 

For each $i\in N$, let $S'_i=S^{\downarrow a}_i$.
For $k=n$ a Nash stable assignment $\pi$ with $\#(\pi)=n$ exists
if and only if $n\in S'_i$ for each $i\in N$. Assigning every agent to $\aes$
is Nash stable if and only if $1\notin S'_i$ for each $i\in N$. 
Now we assume $1\le k\le n-1$ and set 
$U_1 =\{ i\in N \mid k\in S'_i,    k+1\notin S'_i\}$, 
$U_2 =\{ i\in N \mid k\notin S'_i, k+1\in S'_i\}$, and 
$U_3 =\{ i\in N \mid k\in S'_i,    k+1\in S'_i\}$.
If $|U_1|+|U_3|<k$, there does not exist an individually rational
assignment $\pi$ with $\#(\pi)=k$. If $U_2\neq \emptyset$, no Nash
stable assignment $\pi$ with $\#(\pi)=k$ can exist, since each agent from
$U_2$ would want to switch. 
If $|U_3|>k$, no Nash stable assignment $\pi$ with $\#(\pi)=k$
can exist, since at least one agent in $U_3$ would not
be assigned to $a$ and thus would be unhappy. Finally, if
$|U_1|+|U_3|\ge k$, $|U_3|\le k$, $U_2=\emptyset$, we can construct
a Nash stable assignment $\pi$ by assigning all agents from $U_3$ and $k-|U_3|$
agents from $U_1$ to $a$. Since we have $\pi(i)= a_\emptyset$
for all $i$ with $k\not\in S'_i$ and
$\pi(i)\neq a_\emptyset$
for all $i$ with $k+1\in S'_i$, no agent is unhappy.~\qed
\end{proof}

%
%



\vspace{-4mm}

\section{Conclusions and Future Work}\label{discussion}

We have defined a new model for the selection of a number of group activities, discussed its connections with 
hedonic games, defined several stability notions, and, for two of them, we have obtained several complexity
results. A number of our results are positive: finding desirable assignments proves to be tractable
for several restrictions of the problem that are meaningful in practice. 
Interesting directions for future work include exploring the complexity of computing other solution concepts
for $\aGASP$ and extending our results to the more general setting of $\GASP$.

\smallskip

\noindent{\bf Acknowledgments\ }
This research was supported by National Research Foundation (Singapore) under Research
Fellowship NRF2009-08, by the project ComSoc (ANR-09-BLAN-0305-01), 
and by the Austrian Science Fund (P23724-G11 and P23829-N13). This project was 
initiated during the Dagstuhl seminar 12101 ``Computation and Incentives in Social Choice'',
and the authors are very grateful to Dagstuhl for providing a great research environment 
and inspiration for this work.  We thank Craig Boutilier and Michel Le Breton for helpful comments. 
Part of this work was done when the second author was visiting Universit\'e Paris-Dauphine.

\end{document}